\def\A{{\cal A}}
\def\B{{\cal B}}
\def\C{{\cal C}}
\def\S{{\cal S}}
\def\bd{{\partial}}
\def\Vor{{\rm Vor}}
\def\reals{\mathbb{R}}
\def\etal{\textsl{et~al.}}
\newtheorem{theorem}{Theorem}[section]
\newtheorem{lemma}[theorem]{Lemma}
\newtheorem{proposition}[theorem]{Proposition}
\begin{document}

\begin{titlepage}

\title{The 2-Center Problem in Three Dimensions{\large{\thanks{%
Work on this paper by Pankaj Agarwal and Micha Sharir has been
supported by Grant 2006/194 from the U.S.-Israeli Binational Science
Foundation. Work by Pankaj Agarwal is also supported by NSF under
grants
 CNS-05-40347, CCF-06 -35000, IIS-07-13498,
                and CCF-09-40671, by ARO grants
                W911NF-07-1-0376 and W911NF-08-1-0452, by an
                NIH grant 1P50-GM-08183-01, and by a DOE grant
                OEG-P200A070505
Work by Micha Sharir has also been supported
by NSF Grants CCF-05-14079 and CCF-08-30272,
by Grants 155/05 and 338/09 from the Israel Science Fund,
and by the Hermann Minkowski--MINERVA Center for Geometry at Tel Aviv
University. A preliminary version of this paper appeared in {\it Proc. 26th Sympos. on Computational Geometry} 2010, pp. 87--96.
  }}}}

\author{
  Pankaj K. Agarwal\thanks{%
          Department of Computer Science, Box 90129, Duke University,
      Durham, NC 27708-0129, USA; {\tt pankaj@cs.duke.edu}}
       \and
       Rinat Ben Avraham\thanks{%
          School of Computer Science, Tel Aviv University, Tel Aviv 69978,
          Israel; {\tt rinatba@gmail.com}}
       \and
       Micha Sharir\thanks{%
          School of Computer Science, Tel Aviv University, Tel~Aviv 69978,
          Israel; and Courant Institute of Mathematical Sciences, New York
          University, New York, NY~~10012,~USA; {\tt michas@post.tau.ac.il }}
 }
\date{}
\maketitle

\begin{abstract}
Let $P$ be a set of $n$ points in $\reals^3$.  The \emph{2-center
problem} for $P$ is to find two congruent balls of minimum radius
whose union covers $P$. We present two randomized algorithms for
computing a 2-center of $P$. The first  algorithm runs in
$O(n^3\log^5 n)$ expected time, and the second algorithm runs in
$O((n^2 \log^5 n) /(1-r^*/r_0)^3)$ expected time, where $r^*$ is the
radius of the 2-center balls of $P$ and $r_0$ is the radius of the
smallest enclosing ball of $P$. The second algorithm is faster than
the first one as long as $r^*$ is not too close to $r_0$, which is
equivalent to the condition that the centers of the two covering
balls be not too close to each other.
\end{abstract}

\end{titlepage}

\section{Introduction}
\label{sec:introduction}

\subsection{Background}
\label{subsec:background} Let $P = \{p_1,\ldots,p_n\}$ be a set of
$n$ points in $\reals^3$. The \emph{2-center problem} for $P$ is to
find two congruent balls of minimum radius whose union covers $P$.
This is a special case of the general $p$-center problem in
$\reals^d$, which calls for covering a set $P$ of $n$ points in
$\mathbb{R}^d$ by $p$ congruent balls of minimum radius. If $p$ is
part of the input, the problem is known to be NP-complete~\cite{MK}
even for $d=2$, so the complexity of algorithms for solving the
$p$-center problem, for any fixed $p$, is expected to increase more
than polynomially in $p$. Agarwal and Procopiuc showed that the
$p$-center problem in $\reals^d$ can be solved in $n^{O(p^{1-1/d})}$
time~\cite{AP}, improving upon a naive $n^{O(p)}$-solution. At the
other extreme end, the 1-center problem (also known as the
\emph{smallest enclosing ball} problem) is known to be an LP-Type
problem, and can thus be solved in $O(n)$ randomized expected time
in any fixed dimension, and also in deterministic linear time
~\cite{CM, NML, NMLT}. Faster approximate solutions to the general
$p$-center problem have also been proposed~\cite{AP, BHI, BE}.

If $d$ is not fixed, the 2-center problem in $\reals^d$ is
NP-Complete~\cite{MK2}. The 2-center problem in $\reals^2$ has a
relatively rich history, mostly in the past two decades. Hershberger
and Suri~\cite{HS} showed that the decision problem of determining
whether $P$ can be covered by two disks of a given radius $r$ can be
solved in $O(n^2 \log n)$ time. This has led to several
nearly-quadratic algorithms~\cite{ASP, DE, JK} that solve the
optimization problem, the best of which, due to Jaromczyk and
Kowaluk~\cite{JK}, runs in $O(n^2 \log n)$ deterministic time.
Sharir~\cite{MS} considerably improved these bounds and obtained a
deterministic algorithm with $O(n \log^9 n)$ running time. His
algorithm combines several geometric techniques, including
parametric searching, searching in monotone matrices, and dynamic
maintenance of planar configurations. Chan~\cite{TC} (following an
improvement by Eppstein~\cite{DEF}) improved the running time to
$O(n \log^2 n \log^2 \log n)$.

The only earlier work on the 2-center problem in $\reals^3$ we are aware of is by Agarwal~\etal~\cite{AES},
which presents an algorithm with $O(n^{3+\varepsilon})$ running time, for any $\varepsilon > 0$. It uses
a rather complicated data structure for dynamically maintaining upper and lower envelopes of bivariate functions.

\subsection{Our results}
\label{subsec:results} We present two randomized algorithms for the
2-center problem in $\reals^3$. We first present an algorithm whose
expected running time is $O(n^3 \log^5 n)$. It is conceptually a
natural generalization of the earlier algorithms for the planar
2-center problem~\cite{ASP, DE, JK}; its implementation however is
considerably more involved. The second algorithm runs in $O((n^2
\log^5 n) /(1-r^*/r_0)^3)$ expected time, where $r^*$ is the common
radius of the 2-center balls and $r_0$ is the radius of the smallest
enclosing ball of $P$.  This is based on some of the ideas in
Sharir's planar algorithm~\cite{MS}, but requires several new
techniques. As in the previous algorithms, we first present
algorithms for the decision problem: given $r > 0$, determine
whether $P$ can be covered by two balls of radius $r$. We then
combine it with an adaptation of Chan's randomized optimization
technique~\cite{TCG} to obtain a solution for the optimization
problem. In both cases, the asymptotic expected running time of the
optimization algorithm is the same as that of the decision procedure
(which itself is deterministic).

The paper is organized as follows. Section~\ref{sec:sketches}
briefly sketches our two solutions. Section~\ref{sec:cubic_alg}
presents the near-cubic algorithm, and
Section~\ref{sec:improved_alg} presents the improved algorithm. A
key ingredient of both algorithms is a dynamic procedure for testing
whether the intersection of a collection of balls in $\mathbb{R}^3$
is nonempty. We present the somewhat technical details of this
procedure in Section~\ref{sec:spherical_polytopes}, and conclude in
Section~\ref{sec:discussion} with a few open problems.

\section{Sketches of the Solutions}
\label{sec:sketches}

\subsection{The near-cubic algorithm}
\label{subsec:n^3_sketch} To solve the decision problem, in the less
efficient but conceptually simpler manner, we use a standard
point-plane duality, and replace each point $p \in P$ by a dual
plane $p^*$, and each plane $h$ by a dual point $h^*$, such that the
above-below relations between points and planes are preserved. We
note that if $P$ can be covered by two balls $B_1, B_2$ (not
necessarily congruent), then there exists a plane $h$ (containing
the circle $\bd{B_1} \cap \bd{B_2}$, if they intersect at all, or
separating $B_1$ and $B_2$ otherwise) separating $P$ into two
subsets $P_1, P_2$, such that $P_1 \subset B_1$ and $P_2 \subset
B_2$. We therefore construct the arrangement $\A$ of the set $\{p^*
\mid p \in P\}$ of dual planes. It has $O(n^3)$ cells, and each cell
$\tau$ has the property that, for any point $w \in \tau$, its primal
plane $w^*$ separates $P$ into two subsets of points, $P_\tau^+$ and
$P_\tau^-$, which are the same for every $w \in \tau$, and depend
only on $\tau$. We thus perform a traversal of $\A$, which proceeds
from each visited cell to a neighbor cell. When we visit a cell
$\tau$, we check whether the subsets $P_\tau^+$ and $P_\tau^-$ can
be covered by two balls of radius $r$, respectively. To do so, we
maintain dynamically the intersection of the sets $\{B_r(p)\mid p
\in P_\tau^+\}$, $\{B_r(p)\mid p \in P_\tau^-\}$, where $B_r(p)$ is
the ball of radius $r$ centered at $p$, and observe that (a) any
point in the first (resp., second) intersection can serve as the
center of a ball of radius $r$ which contains $P_\tau^+$ (resp.,
$P_\tau^-$), and (b) no ball of radius $r$ can cover $P_\tau^+$
(resp., $P_\tau^-$) if the corresponding intersection is empty.
Moreover, when we cross from a cell $\tau$ to a neighbor cell
$\tau'$, $P_\tau^+$ changes by the insertion or deletion of a single
point, and $P_\tau^-$ undergoes the opposite change, so each of the
sets of balls $\{B_r(p)\mid p \in P_\tau^+\}$, $\{B_r(p)\mid p \in
P_\tau^-\}$ changes by the deletion or insertion of a single ball.
As we know the sequence of updates in advance, maintaining
dynamically the intersection of either of these sets of balls can be
done in an offline manner. Still, the actual implementation is
fairly complicated. It is performed using a variant of the
multi-dimensional parametric searching technique of
Matou\v{s}ek~\cite{JM} (see also~\cite{TCA, CMS, NPT}). The same
procedure is also used by the second improved algorithm. For the
sake of readability, we describe this procedure towards the end of
the paper, in Section~\ref{sec:spherical_polytopes}.

The main algorithm uses a segment tree to represent the sets
$P_\tau^+$ (and another segment tree for the sets $P_\tau^-$).
Roughly, viewing the traversal of $\A$ as a sequence $\Sigma$ of
cells, each ball $B_r(p)$ has a \emph{life-span} (in $P_\tau^+$),
which is a union of contiguous maximal subsequences of cells $\tau$,
in which $p \in P_\tau^+$, and a complementary life-span in
$P_\tau^-$. We store these (connected portions of the) life-spans as
segments in the segment tree. Each leaf of the tree represents a
cell $\tau$ of $\A$, and the balls stored at the nodes on the path
to the leaf from the root are exactly those whose centers belong to
the set $P_\tau^+$ (or $P_\tau^-$). By precomputing the intersection
of the balls stored at each node of the tree, we can express each of
the intersections $\bigcap\{B_r(p) \mid p \in P_\tau^+\}$ and
$\bigcap\{B_r(p) \mid p \in P_\tau^-\}$, for each cell $\tau$, as
the intersection of a logarithmic number of precomputed
intersections (see also~\cite{DE}). We show that such an
intersection can be tested for emptiness in $O(\log^5 n)$ time. This
in turn allows us to execute the decision procedure with a total
cost of $O(n^3 \log^5 n)$. We then return to the original
optimization problem and apply a variant of Chan's randomization
technique~\cite{TCG} to solve the optimization problem by a small
number of calls to the decision problem, obtaining an overall
algorithm with $O(n^3 \log^5 n)$ \emph{expected} running
time.\footnote{\small The earlier algorithm in~\cite{AES} follows
the same general approach, but uses an even more complicated, and
slightly less efficient machinery for dynamic emptiness testing of
the intersection of congruent balls.}

\subsection{The improved solution}
\label{subsec:n^2_sketch} The above algorithm runs in nearly cubic
time because it has to traverse the entire arrangement $\A$, whose
complexity is $O(n^3)$. In Section~\ref{sec:improved_alg} we improve
this bound by traversing only portions of $\A$, adapting some of the
ideas in Sharir's improved solution for the planar
problem~\cite{MS}. Specifically, Sharir's algorithm solves the
decision problem (for a given radius $r$) in three steps, treating
separately three subcases, in which the centers $c_1, c_2$ of the
two covering balls are, respectively, far apart ($|c_1 c_2| > 3r$),
at medium distance apart ($r < |c_1 c_2| \leq 3r$) and near each
other ($|c_1 c_2| \leq r$). We base our solution on the techniques
used in the first two cases, which, for simplicity, we merge into a
single case (as done in~\cite{DEF} for the planar case), and extend
it so that we only need to assume that $|c_1c_2| \geq \beta r$, for
any fixed $\beta > 0$. In more detail, letting $B_r(p)$ denote the
disk of radius $r$ centered at a point $p$, Sharir's algorithm
guesses a constant number of lines $l$, one of which separates the
centers $c_1, c_2$ of the respective solution disks $D_1,D_2$, so
that the set $P_L$ of the points to the left of $l$ is contained in
$D_1$. We then compute the intersection $K(P_L) = \bigcap_{p \in
P_L} B_r(p)$, and intersect each $\bd{B_r(p)}$, for $p \in P_R = P
\setminus P_L$ (the subset of points to the right of $l$), with
$\bd{K(P_L)}$. It is easily seen that $\bd{K(P_L)}$ has linear
complexity and that each circle $\bd{B_r(p)}$, for $p \in P_R$,
intersects it at two points (at most). This produces $O(n)$ critical
points (vertices and intersection points) on $\bd{K(P_L)}$ and
$O(n)$ arcs in between. As argued in~\cite{MS}, it suffices to
search these points and arcs for possible locations of the center of
$D_1$ (and dynamically test whether the balls centered at the
uncovered points have nonempty intersection).

Generalizing this approach to $\reals^3$, we need to guess a separating plane $\lambda$, to retrieve the subset $P_L \subseteq P$ of points to the left of $\lambda$, to compute $\bd{K(P_L)}$ (which, fortunately, still has only linear complexity), to intersect $\bd{B_r(p)}$, for each $p \in P_R$, with $\bd{K(P_L)}$, and to form the arrangement of the resulting intersection curves. Each cell of this arrangement is a candidate for the location of the center of the left covering ball $B_1$, and for each placement in $\tau$, $B_1$ contains the same fixed subset of $P$ (which depends only on $\tau$).

However, the complexity of the resulting arrangement $M_K$ on
$\bd{K(P_L)}$ might potentially be cubic. We therefore compute only
a portion $M$ of $M_K$, which suffices for our purposes, and prove
that its complexity is only $O(n^2)$. This is the main geometric
insight in the improved algorithm, and is highlighted in
Lemma~\ref{lemma:quadratic_K_P_L}. We show that if there is a
solution then $O(1/\beta^3)$ guesses suffice to find a separating
plane. This implies that the running time of the improved decision
procedure is $O((1/\beta^3) n^2\log^5 n)$. Thus, it is nearly
quadratic for any fixed value of $\beta$. We show that one can take
$\beta = 2(r_0/r -1)$, where $r_0$ is the radius of the smallest
enclosing ball of $P$.

To solve the optimization problem, we conduct a search on the
optimal radius $r^*$, using our decision procedure, starting from
small values of $r$ and going up, halving the gap between $r$ and
$r_0$ at each step\footnote{\small We have to act in this manner to
make sure that we do not call the decision procedure with values of
$r$ which are too close to $r_0$, thereby losing control over the
running time.}, until the first time we reach a value $r > r^*$.
Then we use a variant of Chan's technique~\cite{TCG}, combined with
our decision procedure, to find the exact value of $r^*$. The way
the search is conducted guarantees that its cost does not exceed the
bound $O((1/\beta^3) n^2 \log^5 n)$, for the separation parameter
$\beta = 2(r_0/r^* -1)$ for $r^*$. Hence, we obtain a randomized
algorithm that solves the 2-center problem for any positive
separation of $c_1$ and $c_2$, and runs in $O((n^2 \log^5 n)
/(1-r^*/r_0)^3)$ expected time.

\section{A Nearly Cubic Algorithm}
\label{sec:cubic_alg}

\subsection{The decision procedure}
\label{subsec:decision_procedure} In this section we give details of
the implementation of our less efficient solution, some of which are
also applicable for the improved solution. Recall from the
description in Section~\ref{sec:sketches} that the decision
procedure, on a given radius $r$, constructs two segment trees $T^+,
T^-$, on the life-spans of the balls $B_r(p)$, for $p \in P$ (with
respect to the tour of the dual plane arrangement $\A$). Each leaf
is a cell $\tau$ of $\A$, and the balls, whose centers belong to
$P_\tau^+$ (resp., $P_\tau^-$), are those stored at nodes on the
path from the root to $\tau$ in $T^+$ (resp., $T^-$).

For each node $u$ of $T^+$, let $S_u$ denote the intersection of all
the balls (of radius $r$) stored at $u$. We refer to each $S_u$ as a
\emph{spherical polytope}; see~\cite{BCT, BLNP, BN} for (unrelated)
studies of spherical polytopes. We compute each $S_u$ in $O(|S_u|
\log |S_u|)$ deterministic time, using the algorithm by
Br\"{o}nnimann et al.~\cite{BCM} (see also~\cite{CS, ER} for
alternative algorithms). Since the arrangement $\A$ consists of
$O(n^3)$ cells, standard properties of segment trees imply that the
two trees require $O(n^3 \log n)$ storage and $O(n^3 \log^2 n)$
preproccessing time.

Clearly, the intersection $K(P_\tau^+)$ (resp., $K(P_\tau^-)$) of
the balls whose centers belong to $P_\tau^+$ (resp., $P_\tau^-$) is
the intersection of all the spherical polytopes $S_u$, over the
nodes $u$ on the path from the root to $\tau$ in $T^+$ (resp.,
$T^-$).

\paragraph{Intersection of spherical polytopes.}
%
Let $\S = \{S_1,\ldots,S_t\}$ be the set of $t = O(\log n)$
spherical polytopes stored at the nodes of a path from the root to a
leaf of $T^+$ or of $T^-$, where, as above, a spherical polytope is
the intersection of a finite set of balls, all having the common
radius $r$. Each $S_i$ is the intersection of some $n_i$ balls, and
$\sum_{i=1}^t n_i \leq n$. Our current goal is to determine, in
polylogarithmic time, whether the intersection $K$ of the spherical
polytopes in $\S$ is nonempty. If this is the case for at least one
path of $T^+$ and for the same path in $T^-$ then $r^* \leq r$, and
otherwise $r^* >r$. Moreover, if there exist a pair of such paths
for which both intersections have nonempty interior, then $r^* < r$
(because we can then slightly shrink the balls and still get a
nonempty intersection). If no such pair of paths have this property,
but there exist pairs with nonempty intersections (with at least one
of them being degenerate) then $r^* = r$.

The algorithm for testing emptiness of $K$ is technical and fairly
involved. For the sake of readability, we delegate its description
to Section~\ref{sec:spherical_polytopes}. It uses a variant of
multidimensional parametric searching which somewhat resembles
similar techniques used in ~\cite{TCA, CMS, JM, NPT}. It is
essentially independent of the rest of the algorithm (with some
exceptions, noted later). We summarize it in the following
proposition.

\begin{proposition}
\label{prop:intersection_test} Let $\S$ be a collection of spherical
polytopes, each defined as the intersection of at most $n$ balls of
a fixed radius $r$. Let $N$ denote the sum, over the polytopes of
$\S$, of the number of balls defining each polytope. After a
preprocessing stage, which takes $O(N \log n)$ time and uses $O(N)$
storage, we can test whether any $t \leq \log n$ polytopes of $\S$
have a nonempty intersection in $O(\log^5 n)$ time, and also
determine whether the intersection has nonempty interior.
\end{proposition}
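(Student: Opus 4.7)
The plan is to reduce the emptiness test to a three-level nested parametric search on the coordinates of a prospective feasible point $q\in K=\bigcap_{i=1}^t S_i$, using each $S_i$ as a preprocessed black box that supports cheap point and line queries.

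For preprocessing, I would compute each spherical polytope $S_i$ explicitly by the algorithm of Br\"onnimann et al.~\cite{BCM} in $O(n_i\log n_i)$ time; the resulting convex body has $O(n_i)$ spherical faces, on top of which a standard point-location structure of size $O(n_i)$ answers, in $O(\log n_i)=O(\log n)$ time, both point-in-$S_i$ tests and intersections of a query line (or plane) with $S_i$. Summing over the polytopes in $\S$ gives the claimed $O(N\log n)$ preprocessing time and $O(N)$ storage.

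For the emptiness test itself, the base oracle decides whether a given $q\in\reals^3$ lies in $K$ by querying all $t\le\log n$ polytopes in $O(t\log n)=O(\log^2 n)$ time. On top of it I would build three nested parametric searches, one per coordinate, following the multidimensional template of Matou\v{s}ek~\cite{JM} and the refinements in~\cite{TCA,CMS,NPT}. The innermost level finds a feasible $z$ on a fixed vertical line by intersecting the $t$ intervals obtained via ray-shooting into the $S_i$, costing $O(\log^2 n)$. The middle level simulates a parallel sorting-type generic algorithm on the algebraic events induced by the $y$-cross-sections of the $S_i$, resolving its comparisons with the innermost oracle; with Cole's speedup this adds one $\log n$ factor and yields $O(\log^3 n)$ per invocation. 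The outermost level applies the same template over the parameter $x$ with the middle oracle, for a grand total of $O(\log^5 n)$.

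The main obstacle I anticipate is keeping each level's overhead at a single logarithmic factor given that the bounding surfaces of the $S_i$ are spherical, so cross-sections are bounded by circular arcs and the critical events driving the parametric search are algebraic rather than linear. I would handle this via the generic-sorting framework of the cited parametric-search papers, which accommodates algebraic comparisons resolved by oracle calls, together with a careful case analysis at the places where two bounding spheres meet. To decide, finally, whether the interior of $K$ is nonempty, I would symbolically shrink every defining ball by an infinitesimal $\varepsilon>0$, rerun the emptiness procedure while treating the events as low-degree polynomials in $\varepsilon$ and taking $\varepsilon\to 0^+$, and report nonempty interior iff the shrunken intersection remains nonempty; this only multiplies the running time by a constant and preserves the $O(\log^5 n)$ bound.
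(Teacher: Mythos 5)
Your preprocessing step and the general outline (nested multidimensional parametric search on top of persistent point-location structures over the projected maps of each $S_i$, with Cole's acceleration) do match the paper, and your symbolic-perturbation route to testing nonempty interior is a legitimate alternative to the paper's approach, which instead tracks a witness point $w\in K$ and examines the at most four balls whose boundaries pass through it. But there is a real gap in the core of the argument, visible already in your own accounting: starting from an $O(\log^2 n)$ base oracle and asserting that each of the two outer levels ``adds one $\log n$ factor'' yields $O(\log^4 n)$, not the $O(\log^5 n)$ you state at the end, so either the bound is unexplained or the structure of the outer level is not what you describe.

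The reason the paper lands on $O(\log^5 n)$, and where your plan goes too fast, is that the outer problem $\Pi_2$ is not a one-dimensional parametric search over $x$ with the $y$-level as oracle. It is a genuine two-dimensional generic point location of $K^*$ in each projected map $M_i^\pm$: one must first locate the $x$-slab and then locate $K^*$ among the $y$-sorted \emph{arcs} inside that slab. Those arcs are circular/elliptic, not lines, so a comparison ``is $K^*$ above or below the arc $\gamma$?'' cannot be answered by a single call to the line oracle $\Pi_1$. Instead, the paper must locate $\gamma$ inside every other map $M_j$, which is itself a nested binary search resolved by further $\Pi_1$ calls along vertical lines through the $O(1)$ intersection points of $\gamma$ with each candidate arc $\delta$ of $M_j$. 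This third level of nesting is exactly what produces the extra $\log n$ factor, giving the chain $O(\log^2 n)\to O(\log^3 n)\to O(\log^4 n)\to O(\log^5 n)$ for $\Pi_0$, $\Pi_1$, a single arc comparison, and $\Pi_2$, respectively. Your proposal treats the outermost level as ``the same template'' as the middle one and never confronts the fact that the comparison objects are curved arcs requiring their own recursive localization; without that, the search cannot be carried out, and the bound you quote is not supported by your reasoning.
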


Hence, we check, for each cell $\tau$, whether each of $K(P_\tau^+)$
and $K(P_\tau^-)$ are nonempty and non-degenerate. To this end, we
go over each path of $T^+$, and over the same path of $T^-$, and
check, using the procedure described in
Proposition~\ref{prop:intersection_test}, whether the spherical
polytopes along the tested paths (of $T^+$ and of $T^-$) have a
nonempty intersection (and whether these intersections have nonempty
interiors). We stop when a solution for which both $K(P_\tau^+)$ and
$K(P_\tau^-)$ are nonempty and non-degenerate is obtained, and
report that $r^* < r$. Otherwise, we continue to test all cells
$\tau$. If at least one degenerate solution is found (i.e., a
solution where both $K(P_\tau^+), K(P_\tau^-)$ are nonempty, and at
least one of them has nonempty interior), we report that $r^* = r$,
and otherwise $r^* > r$.

By proposition~\ref{prop:intersection_test}, the cost of this
procedure is $O(n^3 \log^5 n)$. This subsumes the cost of all the
other steps, such as constructing the arrangement $\A$ and the
segment trees $T^+, T^-$. We therefore get a decision procedure
which runs in $O(n^3 \log^5 n)$ (deterministic) time.

\subsection{Solving the optimization problem}
\label{subsec:optimization_problem}
We now combine our decision procedure with the randomized
optimization technique of Chan~\cite{TCG}, to obtain an algorithm
for the optimization problem, which runs in $O(n^3 \log^5 n)$
\emph{expected} time. Our application of Chan's technique, described
next, is somewhat non-standard, because each recursive step has also
to handle global data, which it inherits from its ancestors.

Chan's technique, in its ``purely recursive'' form, takes an optimization problem that has to compute an optimum value $w(P)$ on an input set $P$. The technique replaces $P$ by several subsets $P_1,\ldots,P_s$, such that $w(P) = \min\{w(P_1),\ldots,w(P_s)\}$, and $|P_i| \leq \alpha|P|$ for each $i$ (here $\alpha < 1$ and $s$ are constants). It then processes the subproblems $P_i$ in a \emph{random} order, and computes $\displaystyle\min_i w(P_i)$ by comparing each $w(P_i)$ to the minimum $w$ collected so far, and by replacing $w$ by $w(P_i)$ if the latter is smaller.\footnote{\small So the value of $w$ keeps shrinking.} Comparisons are performed by the decision procedure, and updates of $w$ are computed recursively. The crux of this technique is that the expected number of recursive calls (in a single recursive step) is only $O(\log s)$, and this (combined with some additional enhancements, which we omit here) suffices to make the expected cost of the whole procedure asymptotically the same as the cost of the decision procedure, for \emph{any} values of $s$ and $\alpha$. Technically, if the cost $D(n)$ of the decision procedure is
$\Omega(n^\gamma)$, where $\gamma$ is some fixed positive constant, the expected running time is $O(D(n))$ provided that
\begin{equation}
\label{equation:lnr}
(\ln s + 1) \alpha^\gamma < 1.
\end{equation}
However, even when (\ref{equation:lnr}) does not hold ``as
is'', Chan's technique enforces it by compressing $l$ levels of the
recursion into a single level, for $l$ sufficiently large, so its expected cost is still $O(D(n))$. See~\cite{TCG} for details.

To apply Chan's technique to our decision procedure,
we pass to the dual space, where each point $p \in P$ is mapped to a
plane $p^*$, as done in the decision procedure. We obtain the set $P^* =
\{p^* \mid p \in P\}$ of dual planes, and we consider its arrangement
$\A = \A(P^*)$, where each cell $\tau$ in $\A$ represents an
equivalence class of planes in the original space, which separate $P$
into the same two subsets of points $P_\tau^+, P_\tau^-$.

To decompose the optimization problem into subproblems, as required
by Chan's technique, we construct a $(1/\varrho)$-\emph{cutting} of
the dual space. We recall that, given a collection $\it{H}$ of $n$
hyperplanes in $\mathbb{R}^d$ and a parameter $1 \leq \varrho \leq
n$, a $(1/\varrho)$-\emph{cutting} of $\A(\it{H})$ of size $q$ is a
partition of space into $q$ (possibly unbounded) openly disjoint
$d$-dimensional simplices $\Delta_1,\ldots,\Delta_q$, such that the
interior of each simplex $\Delta_i$ is intersected by at most
$n/\varrho$ of the hyperplanes of $\it{H}$. See~\cite{JMC} for more
details.  We use the following well known result~\cite{BC, CF}:

\begin{lemma}
\label{lemma:cutting}
Given a set $\it{H}$ of $n$ hyperplanes in $\mathbb{R}^d$, a
$(1/\varrho)$-cutting of $\A(\it{H})$ of size $O(\varrho^d)$ can be constructed
in time $O(n\varrho^{d-1})$, for any $\varrho \leq n$.
\end{lemma}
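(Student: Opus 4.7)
The plan is to establish the lemma via the classical two-stage construction: first obtain a weaker cutting of size $O((\varrho\log\varrho)^d)$ by random sampling, and then peel off the logarithmic factors and control the construction time via Chazelle's hierarchical refinement scheme. Throughout, the cutting is produced by triangulating the faces of the arrangement of an appropriate sample (using, say, the bottom-vertex triangulation), which is what gives the simplicial structure required by the statement.

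First, I would prove a weaker version: there exists a $(1/\varrho)$-cutting of $\A(H)$ of size $O((\varrho\log\varrho)^d)$. Take a random sample $R\subseteq H$ of size $s = c\varrho\log\varrho$, for a sufficiently large constant $c$, and let $T(R)$ be the bottom-vertex triangulation of $\A(R)$, which has $O(s^d)$ simplices. For any simplex $\Delta\in T(R)$, the set of hyperplanes of $H$ crossing $\Delta$ is a ``defining'' subset of constant size for the combinatorial type of $\Delta$, so by the $\varepsilon$-net theorem (or equivalently by Clarkson's probabilistic analysis on random samples) the probability that any simplex is crossed by more than $n/\varrho$ hyperplanes of $H$ is $o(1)$, provided $c$ is large enough. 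Thus a suitable sample yields a valid $(1/\varrho)$-cutting of size $O((\varrho\log\varrho)^d)$. This step can be derandomized, but the $\log^d$ overhead in the size is still present, and its construction cost as written is $O(n(\varrho\log\varrho)^{d-1})$ or worse.

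The main obstacle, and the step I expect to be most delicate, is eliminating the $\log^d\varrho$ factor in the size while simultaneously achieving the claimed $O(n\varrho^{d-1})$ construction time. For this I would use Chazelle's iterated refinement. Set $\varrho_i = 2^i$ and construct cuttings $\Xi_0,\Xi_1,\ldots,\Xi_k$ with $\varrho_k = \varrho$, where $\Xi_{i+1}$ refines $\Xi_i$ as follows: for every simplex $\Delta\in\Xi_i$, let $H_\Delta$ be the set of at most $n/\varrho_i$ hyperplanes of $H$ crossing $\Delta$, compute (by the weak construction above) a $(1/2)$-cutting of $\A(H_\Delta)$ restricted to $\Delta$, of size $O(\log^d 2) = O(1)$ in the \emph{ideal} world, and clip its simplices to $\Delta$. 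The point is that since the local sample size used in refining $\Delta$ is only $O(1)$ (a constant-size $(1/2)$-sample suffices here with positive probability, by a more careful exchange argument using the Clarkson--Shor framework), the expected number of children of each $\Delta$ is a constant, so $|\Xi_{i+1}|\leq C|\Xi_i|$, yielding $|\Xi_k|=O(\varrho^d)$. Verifying the constant-size refinement step — which is what saves the logarithmic factor — is the technical heart of the argument and is where the references \cite{BC, CF} carry the weight.

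For the running time, I would amortize the construction across the levels of the hierarchy. At level $i$, building each child cutting of a simplex $\Delta\in\Xi_i$ requires reading $H_\Delta$, so the cost at level $i$ is $O(|\Xi_i|\cdot n/\varrho_i)=O(n\varrho_i^{d-1})$. Summing the geometric series $\sum_{i=0}^k n\varrho_i^{d-1}$ telescopes to $O(n\varrho^{d-1})$ as required, and a cell-location data structure built along the hierarchy allows one to read off $H_\Delta$ in time proportional to its size. This yields the bound $O(n\varrho^{d-1})$ on both the size (after multiplication by a constant) and the construction time, completing the lemma.
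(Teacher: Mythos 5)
The paper does not prove Lemma~\ref{lemma:cutting} at all: it is stated as ``the following well known result'' and simply attributed to~\cite{BC, CF}, so there is no in-paper argument to compare against. Your proposal is, in outline, a faithful sketch of the standard proof in those references (Chazelle's hierarchical cuttings, using Chazelle--Friedman-style sampling for the base case): a weak $O((\varrho\log\varrho)^d)$-size cutting by random sampling and $\varepsilon$-net arguments, followed by level-by-level refinement with constant branching factor to kill the $\log^d$ overhead, with the construction time bounded by the geometric series $\sum_i O(n\varrho_i^{d-1}) = O(n\varrho^{d-1})$. That is the right architecture and the right places to put the technical weight.

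Two points you should be aware of if you ever had to fill this in. First, the claim that ``the probability that any simplex is crossed by more than $n/\varrho$ hyperplanes is $o(1)$'' is slightly off as stated: what the $\varepsilon$-net theorem gives directly is that a random $s$-sample is, w.h.p., a $(1/\varrho)$-net for the range space of open simplices, which then implies that no simplex of the bottom-vertex triangulation of $\A(R)$ is crossed by more than $n/\varrho$ hyperplanes of $H$. It is the net property of $R$, not a union bound over simplices, that does the work. Second, and more importantly, in the refinement step you cannot literally assert that each local $(1/2)$-cutting has $O(1)$ simplices for every $\Delta$; some simplices will have more children than others. What Chazelle actually shows, via a Clarkson--Shor-type counting argument, is that the \emph{total} number of simplices across the new level is $O(\varrho_{i+1}^d)$ on average, and he then derandomizes to make the construction deterministic. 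Your phrase ``the expected number of children of each $\Delta$ is a constant'' is close but should really be an amortized/aggregate bound, not a per-simplex one. You correctly flag this as the technical heart and defer to the references, which is an acceptable level of detail for a lemma the paper itself only cites.
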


Returning to our setup, we construct a $(1/\varrho)$-cutting for
$\A(P^*)$, for a specific constant value of $\varrho$, that we will
fix later, and obtain $O(\varrho^3)$ simplices, such that the
interior of each of them is intersected by at most $n/\varrho$
planes of $P^*$. Each simplex $\Delta_i$ corresponds to one
subproblem and contains some (possibly only portions of) cells
$\tau_1,\ldots,\tau_k$ of the arrangement $\A$. We recall that each
cell $\tau_j$ represents an equivalence class of planes which
separate $P$ into two subsets of points $P_{\tau_j}^+$ and
$P_{\tau_j}^-$. Hence, $\Delta_i$ represents a collection of such
equivalence classes. All these subproblems have in common the sets
$(P^*)_{\Delta_i}^+$, $(P^*)_{\Delta_i}^-$, consisting,
respectively, of all the planes that pass fully above $\Delta_i$ and
those that pass fully below $\Delta_i$. (These sets are dual to
respective subsets $P_{\Delta_i}^+, P_{\Delta_i}^-$ of $P$, where
$P_{\Delta_i}^+$ is contained in all the sets $P_{\tau_j}^+$, for
the cells $\tau_j$, that meet $\Delta_i$, and symmetrically for
$P_{\Delta_i}^-$.) Note that most of the dual planes belong to
$(P^*)_{\Delta_i}^+ \cup (P^*)_{\Delta_i}^-$; the ``undecided''
planes are those that cross the interior of $\Delta_i$, and their
number is at most $n/\varrho$. We denote the set of these planes as
$(P^*)_{\Delta_i}^0$ (and the set of their primal points as
$P_{\Delta_i}^0$).

To apply Chan's technique, we construct two segment trees on the
arrangement of $(P^*)_{\Delta_i}^0$, as described in Section~\ref{subsec:decision_procedure}. Consider one of these segment trees, $T^+$,
that maintains the set of balls $\B^+ = \{B_r(p) \mid p \in
P_{\tau_j}^+\}$. Each cell $\tau_j$ in $\Delta_i$ is represented by a
leaf of $T^+$. Each ball is represented as a collection of disjoint
life-spans, with respect to a fixed tour of the cells of $\A((P^*)_{\Delta_i}^0)$,
which are stored as segments in $T^+$, as described earlier.
In addition, we compute the intersection of the balls centered
at the points of $P_{\Delta_i}^+$, in $O(n \log n)$ time, and store it at
the root of $T^+$. Note that, as we go down the recursion, we keep
adding planes to $(P^*)_{\Delta_i}^+$, that is, points to $P_{\Delta_i}^+$, and the
actual set $P_{\Delta_i}^+$ of points dual to the planes above the
current $\Delta_i$ is the union of logarithmically many subsets, each
obtained at one of the ancestor levels of the recursion, including the current step.
However, we cannot inherit the precomputed intersections of the balls
in these subsets of $P_{\Delta_i}^+$ from the previous levels, since, as we go down the
recursion, Chan's technique keeps `shrinking' the radius of the balls. Hence, each
time we have to solve a decision subproblem, we compute the
intersection of the balls centered at the points of $P_{\Delta_i}^+$
(collected over all the higher levels of the recursion)
from scratch. (See below for details on the additional cost incurred
by this step.) We build a second segment tree $T^-$
that maintains the balls of $\B^- = \{B_r(p) \mid p \in
P_{\tau_j}^-\}$, in a fully analogous manner. The running time so far (of the decision procedure)
is $O(n \log n + m^3 \log^2 m)$, where $m$ is the
number of planes in $(P^*)_{\Delta_i}^0$ and $n$ is the size of the
initial input set $P$.

To solve the decision procedure for a given subproblem associated
with a simplex $\Delta_i$, we test, by going over all the
root-to-leaf paths in $T^+$ and $T^-$, whether there exists a cell
$\tau$ (overlapping $\Delta_i$), for which the intersections of the
spherical polytopes on the two respective paths in $T^+$ and $T^-$
are nonempty (and, if nonempty, whether they both have nonempty
interiors). The overall cost of this step, iterating over the
$O(m^3)$ cells of $\A((P^*)_{\Delta_i}^0)$ and applying the
procedure from Section~\ref{subsec:decision_procedure} for
intersecting spherical polytopes, is $O(m^3 \log^5 n)$.

When the recursion bottoms out, we have two subsets
$P_{\Delta_i}^+$, and $P_{\Delta_i}^-$ of $O(n)$ points, and a
constant number of points in $P_{\Delta_i}^0$. Hence, we try the
constant number of possible separations of $P_{\Delta_i}^0$ into an
ordered pair of subsets $P_1$ and $P_2$, and, for each of these separations, we compute the two
smallest enclosing balls of the sets $P_{\Delta_i}^+ \cup P_1$ and
$P_{\Delta_i}^- \cup P_2$ in linear time. If both $P_{\Delta_i}^+
\cup P_1$ and $P_{\Delta_i}^- \cup P_2$ can be covered by balls of
radius $r$, for at least one of the possible separations of
$P_{\Delta_i}^0$ into two subsets, then we have found a solution for
the 2-center problem. (Discriminating between $r^* = r$ or $r^* < r$
is done as in Section~\ref{subsec:decision_procedure}.)

We now apply Chan's technique to this decision procedure. Note that
this application is not standard because the recursive subproblems are
not ``pure'', as they also involve the ``global'' parameter $n$. We
therefore need to exercise some care in the analysis of the expected
performance of the technique.

Specifically, denote by $T(m,n)$ an upper bound on the expected
running time of the algorithm, for preprocessing
a recursive subproblem involving $m$ points, where the initial input
consists of $n$ points. Then $T(m,n)$ satisfies the following recurrence.

\begin{equation}
\label{eqn:T(m,n)}
T(m,n) \leq \left\{ \begin{array}{ll}
\ln(c\varrho^3)T(m/\varrho,n) + O(m^3 \log^5 n + n\log n),   & \mbox{for $m \geq \varrho$,}\\
O(n),                                        & \mbox{for $m <
\varrho$,}
\end{array}\right.
\end{equation}
where $c$ is an appropriate absolute constant (so that $c\varrho^3$
bounds the number of cells of the cutting), and $\varrho$ is chosen
to be a sufficiently large constant so that (\ref{equation:lnr})
holds (with $s = c\varrho^3, \alpha = 1/\varrho$, and $\gamma = 3$).
It is fairly routine (and we omit the details) to show that the
recurrence (\ref{eqn:T(m,n)}) yields the overall bound $O(n^3 \log^5
n)$ on the expected cost of the initial problem; i.e., $T(n,n) =
O(n^3 \log^5 n)$. We thus obtain the following intermediate result.

\begin{theorem}
Let $P$ be a set of $n$ points in $\mathbb{R}^3$. A 2-center for $P$
can be computed in $O(n^3 \log^5 n)$ randomized expected time.
\end{theorem}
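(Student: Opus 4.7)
The plan is to combine the $O(n^3 \log^5 n)$-time deterministic decision procedure of Section~\ref{subsec:decision_procedure} with the non-standard variant of Chan's randomized optimization paradigm~\cite{TCG} sketched above. The key observation is that, since the decision cost $D(n)$ is $\Omega(n^3)$, inequality~\eqref{equation:lnr} can be met with $\gamma = 3$, $s = c\varrho^3$, and $\alpha = 1/\varrho$ by choosing the cutting parameter $\varrho$ to be a sufficiently large absolute constant. Once $\varrho$ is fixed, the optimization algorithm descends a recursion tree whose generic node owns a simplex $\Delta$ of the current cutting, the two inherited subsets $P_\Delta^+$, $P_\Delta^-$ of points dual to planes entirely above/below $\Delta$, and the set of at most $m$ ``undecided'' points dual to planes crossing $\Delta$.

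At a generic recursive step I would execute the following. First, apply Lemma~\ref{lemma:cutting} to the arrangement of the $m$ undecided planes to obtain a $(1/\varrho)$-cutting of $O(\varrho^3)$ simplices, each meeting at most $m/\varrho$ of them. Second, build the two segment trees $T^+, T^-$ on this arrangement, augmenting their roots by the intersections of the $O(n)$ balls associated with $P_\Delta^+$ and $P_\Delta^-$, respectively; these intersections are recomputed from scratch in $O(n \log n)$ time, since Chan's technique keeps shrinking the candidate radius and so cached copies from ancestor nodes are no longer valid. Third, run Proposition~\ref{prop:intersection_test} along the paired root-to-leaf paths of $T^+$ and $T^-$ over the $O(m^3)$ cells, for a total emptiness-testing cost of $O(m^3 \log^5 n)$. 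Fourth, recurse into the $O(\varrho^3)$ sub-simplices in random order, interleaving calls to the decision procedure at the current shrunken radius to prune branches in Chan's style. At the leaves, where $m$ drops below $\varrho$, only a constant number of separations of the $O(1)$ remaining undecided points have to be tried, each resolved by computing two smallest enclosing balls of $O(n)$-point sets in $O(n)$ time.

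Solving the recurrence~\eqref{eqn:T(m,n)} is then the routine part. The $O(m^3 \log^5 n)$ term contributes, at depth $d$ of the recursion, a total of
\begin{equation*}
\bigl(\ln(c\varrho^3)\bigr)^{d}\cdot O\!\left(\left(n/\varrho^{d}\right)^{3}\log^{5} n\right) \;=\; O\!\left(n^{3}\log^{5} n\cdot\left(\tfrac{\ln(c\varrho^{3})}{\varrho^{3}}\right)^{d}\right),
\end{equation*}
which, by~\eqref{equation:lnr}, forms a geometric series summing to $O(n^{3}\log^{5} n)$. The main obstacle is bounding the additive $O(n\log n)$ term, which is \emph{not} shrunk by recursion. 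Here I would exploit that $\varrho$ may be taken arbitrarily large, so that $\log_\varrho \ln(c\varrho^{3}) < 2$ for $\varrho$ big enough; the recursion tree then has only $O(n^{2-\delta})$ nodes for some $\delta>0$, so the aggregate $O(n\log n)$ overhead, together with the $O(n)$ leaf cost, is polynomially smaller than $n^{3}\log^{5} n$ and is absorbed into the dominant term. Combining the two estimates gives $T(n,n)=O(n^{3}\log^{5} n)$, completing the proof.
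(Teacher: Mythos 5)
Your proposal follows essentially the same route as the paper: the same cutting-based decomposition of the dual arrangement, the same treatment of the inherited ``global'' sets $P_\Delta^+, P_\Delta^-$ (including the from-scratch recomputation of their ball intersections), and the same recurrence~(\ref{eqn:T(m,n)}). The only addition is that you spell out the solution of the recurrence -- in particular the bound on the aggregate $O(n\log n)$ overhead via the $O(n^{\epsilon})$ size of the recursion tree for $\varrho$ large enough -- which the paper omits as routine; your argument there is correct.
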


\section{An Improved Algorithm}
\label{sec:improved_alg}

\subsection{An improved decision procedure $\Gamma$}
\label{subsec:improved_decision_procedure}

\begin{figure}[htbp]
\begin{center}
\input{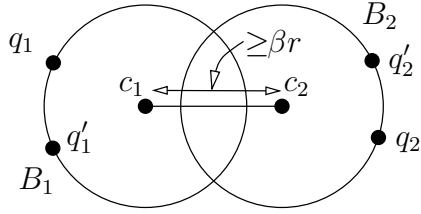}
\caption{\small \sf The points $q_1, q_1', q_2, q_2'$ prevent $|c_1
c_2|$ from getting smaller.} \label{figure:beta}
\end{center}
\end{figure}

Consider the decision problem, where we are given a radius $r$ and a parameter $\beta > 0$, and
have to determine whether $P$ can be covered by two balls of radius
$r$, such that the distance between their centers $c_1, c_2$ is at
least $\beta r$. (Details about supplying a good lower bound for $\beta$ will be given in Section~\ref{sec:separated_centers_optimization}.)
By this we mean that there is no placement of two balls of radius $r$, which cover $P$, such that the distance between their centers is smaller than $\beta r$; see Figure~\ref{figure:beta}.

This assumption is easily seen to imply the following property: Let
$C_{12}$ denote the intersection circle of $\bd{B_1}$ and $\bd{B_2}$
(assuming that $B_1 \cap B_2 \neq \emptyset$). Then any hemisphere
$\nu$ of $\bd{B_1}$, such that (a) the plane $\pi$ through $c_1$
delimiting $\nu$ is disjoint from $C_{12}$, and (b) $\nu$ and
$C_{12}$ lie on different sides of $\pi$, must contain a point $q$
of $P$, for otherwise we could have brought $B_1$ and $B_2$ closer
together by moving $c_1$ in the normal direction of $\pi$, into the
halfspace containing $c_2$ (and $C_{12}$). See
Figure~\ref{figure:case2_assumption}.

\begin{figure}[htbp]
\begin{center}
\input{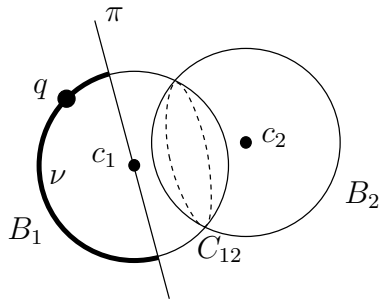}
\caption{\small \sf The plane $\pi$ passes through $c_1$ and is
disjoint from $C_{12}$. The hemisphere $\nu$ delimited by $\pi$,
which lies on the side of $\pi$ not containing $C_{12}$, must
contain a point $q$ of $P$.} \label{figure:case2_assumption}
\end{center}
\end{figure}

\begin{figure}[htbp]
\begin{center}
\input{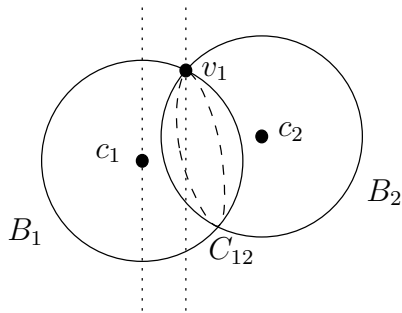}
\caption{\small \sf $v_1$ is the leftmost point of the intersection
circle $C_{12}$.} \label{figure:new_case2_1}
\end{center}
\end{figure}

\smallskip
\noindent{\bf Guessing orientations and separating planes.} We
choose a set $D$ of canonical orientations, so that the maximum
angular deviation of any direction $u$ from its closest direction in
$D$ is an appropriate multiple $\alpha$ of $\beta$. The connection
between $\alpha$ and $\beta$ is given by the following reasoning.
Fix a direction $v \in D$ so that the angle between the orientation
of $c_1 c_2$ and $v$ is at most $\alpha$. Rotate the coordinate
frame so that $v$ becomes the $x$-axis. As above, let $C_{12}$
denote the intersection circle of $\bd{B_1}$ and $\bd{B_2}$
(assuming that the balls intersect). Let $v_1$ be the leftmost point
of $C_{12}$ (in the $x$-direction); see
Figure~\ref{figure:new_case2_1}. If $B_1$ and $B_2$ are disjoint
(which only happens when $|c_1 c_2| > 2r$) we define $v_1$ to be the
leftmost point of $B_2$. To determine the value of $\alpha$, we note
that (in complete analogy with Sharir's algorithm in the
plane~\cite{MS}) our procedure will try to find a $yz$-parallel
plane, which separates $c_1$ from $v_1$. For this, we want to ensure
that $x(v_1) - x(c_1) > \beta r/4$, say, to leave enough room for
guessing such a separating plane. Let $\theta$ denote the angle
$\varangle v_1c_1c_2$ (see Figure~\ref{figure:new_case2_2}). Using
the triangle inequality on angles, the angle between
$\overrightarrow{c_1 v_1}$ and the $x$-axis is at most $\theta +
\alpha$, so $x(v_1) - x(c_1) \geq r \cos(\theta + \alpha)$. Hence,
to ensure the above separation, we need to choose $\alpha$, such
that $\cos(\theta + \alpha) > \beta/4$. Since $|c_1 c_2| \geq \beta
r$, we have $\cos \theta \geq \beta/2$. Hence, it suffices to choose
$\alpha$, such that
$$\alpha \leq \cos^{-1} \frac{\beta}{4} - \cos^{-1} \frac{\beta}{2} =
    \sin^{-1} \frac{\beta}{2} - \sin^{-1} \frac{\beta}{4}
    = \Theta(\beta).$$
With this constraint on $\alpha$, the size of $D$ is
$\Theta\left(1/\alpha^2\right) = \Theta\left(1/\beta^2\right)$.

\begin{figure}[htbp]
\begin{center}
\input{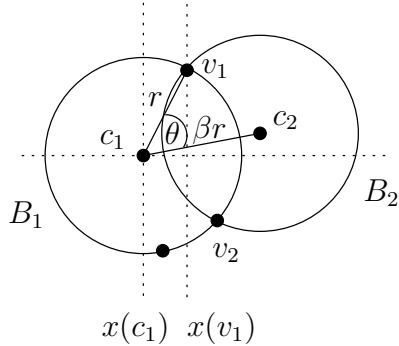}
\caption{\small \sf $x(v_1) - x(c_1) \geq r \cos(\theta + \alpha)$.}
\label{figure:new_case2_2}
\end{center}
\end{figure}

We draw $O(1/\beta)$ $yz$-parallel planes, with horizontal
separation of $\beta r/4$, starting at the leftmost point of $P$
(with respect to the guessed orientation). One of these planes will
separate $v_1$ from $c_1$. Thus, the total number of guesses that we
make (an orientation in $D$ and a separating plane) is
$O(1/\beta^3)$. The following description pertains to a correct
guess, in which the properties that we require are satisfied. (If
all guesses fail, the decision procedure has a negative answer.)

\smallskip
\noindent{\bf Reducing to a 2-dimensional search.}
By the property noted above, the left hemisphere $\nu_{\lambda_0}$ of $\bd{B_1}$, delimited by
the $yz$-parallel plane $\lambda_0$ through $c_1$,
must pass through at least one point $q$ of $P$ (see Figure~\ref{figure:new_case2_left}).

\begin{figure}[htbp]
\begin{center}
\input{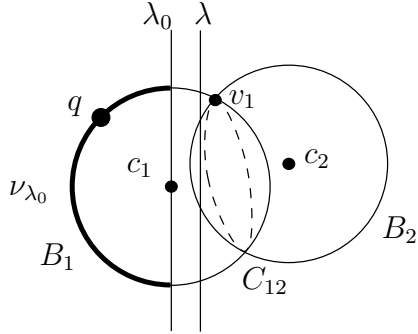}
\caption{\small \sf The separating plane $\lambda$ and its parallel
copy $\lambda_0$ through $c_1$. The hemisphere $\nu_{\lambda_0}$ of
$\bd{B_1}$ to the left of $\lambda_0$ must contain a point $q$ of
$P$.} \label{figure:new_case2_left}
\end{center}
\end{figure}

Let $P_L$ denote the subset
of points of $P$ lying to the left of $\lambda$. Then $P_L$
must be fully
contained in $B_1$ and contain $q$.
We compute the intersection $K(P_L) = \bigcap \{B_r(p) \mid p \in P_L\}$
in $O(n \log n)$ time~\cite{BCM}. If $K(P_L)$ is empty, then $P_L$ cannot be covered
by a ball of radius $r$ and we determine that the currently assumed
configuration does not yield a positive solution for the decision problem.
Otherwise, since $P_L \subseteq B_1$, $c_1$
must lie in
$K(P_L)$. Moreover, since $q \in P_L$ lies on the \emph{left} portion
of $\bd{B_1}$, $c_1$ must
lie on the \emph{right} portion of the boundary of $K(P_L)$. Finally, since $c_1$ lies to the
left of $\lambda$, only the portion $\sigma_L$ of the right part of $\bd{K(P_L)}$ to the left of $\lambda$ has to be considered.
If $K(P_L)$ is disjoint from $\lambda$ then $\sigma_L$ is just the right
portion of $\bd{K(P_L)}$. Otherwise, $\sigma_L$ has a ``hole'', bounded by $\bd{K(P_L)} \cap \lambda$, which is a convex piecewise-circular curve, being the boundary of the intersection of the disks $B_r(p) \cap \lambda$, for $p \in P_L$.

We partition $\sigma_L$ into \emph{quadratically many} cells, such
that if we place the center $c_1$ of the left solution ball $B_1$ in
a cell $\tau$, then, no matter where we place it within $\tau$,
$B_1$ will cover the same subset of points from $P$. To construct
this partition, we intersect, for each $p \in P_R = P \setminus
P_L$, the sphere $\bd{B_r(p)}$ with $\sigma_L$ and obtain a curve
$\gamma_p$ on $\sigma_L$; this curve bounds the portion of the
unique face of $\bd{K(P_L \cup \{p\})}$ within $\sigma_L$. Hence,
within $K(P_L)$, it is a closed connected curve (it may be
disconnected within $\sigma_L$, though). Let $M$ denote the
arrangement formed on $\sigma_L$ by the curves $\gamma_p$, for $p
\in P_R$, and by the arcs of $\sigma_L$. Apriori, $M$ might have
cubic complexity, if many of the $O(n^2)$ pairs of curves $\gamma_a,
\gamma_b$, for $a,b \in P_R$, traverse a linear number of common
faces of $\sigma_L$, and intersect each other on many of these
faces, in an overall linear number of points. Equivalently, the
``danger'' is that the intersection circle $C_{ab}$ of a
corresponding pair of spheres $\bd{B_r(a)}, \bd{B_r(b)}$, for $a,b
\in P_R$, could intersect a linear number of faces of $\sigma_L$
(and each of these intersections is also an intersection point of
$\gamma_a$ and $\gamma_b$). See Figure~\ref{figure:new_case2_cubic}.

\begin{figure}[htbp]
\begin{center}

\input{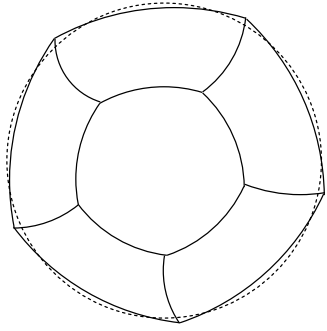}

\caption{\small \sf In a general setup (different than ours), an intersection circle of two balls (the dotted circle) may
intersect a linear number of faces of $\bd{K(P_L)}$.}
\label{figure:new_case2_cubic}
\end{center}
\end{figure}

\smallskip
\noindent{\bf Complexity of $M$.} Fortunately, in the assumed
configuration, this cubic behavior is impossible --- $C_{ab}$ can
meet only a constant number of faces of $\sigma_L$. Consequently,
the overall complexity of $M$ is only quadratic. This crucial claim
follows from the observation that, for $C_{ab}$ to intersect many
faces of $\sigma_L$, it must have many short arcs, each delimited by
two points on $\sigma_L$ and lying outside $K(P_L)$. The main
geometric insight, which rules out this possibility, and leads to
our improved algorithm, is given in the following lemma.

\begin{lemma}
\label{lemma:quadratic_K_P_L}
Let $\lambda$ be a $yz$-parallel plane, which separates $v_1$ from $c_1$.
Let $P_L \subseteq P$ be the subset of points of $P$ to the left of $\lambda$, and let $P_R = P \setminus P_L$.
Let $C_{ab}$ denote the intersection circle of $\bd{B_r(a)}, \bd{B_r(b)}$, for some pair of points $a, b \in P_R$, and let $q \in P_L$. If the arc $\omega = C_{ab} \setminus B_r(q)$ is smaller than a semicircle of $C_{ab}$, then at least one of its endpoints must lie to the right of $\lambda$.
\end{lemma}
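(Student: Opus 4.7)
The plan is to argue by contradiction: suppose both endpoints $e_1, e_2$ of $\omega$ lie to the left of $\lambda$, and derive a contradiction. First, since $|e_1a|=|e_2a|=r$ (because $e_1,e_2\in C_{ab}\subset\bd B_r(a)$), and similarly for $b$ (using $C_{ab}\subset\bd B_r(b)$) and for $q$ (using $e_1,e_2\in\bd B_r(q)$), each of $a,b,q$ is equidistant from $e_1$ and $e_2$, and hence lies on the perpendicular bisector plane $H$ of the segment $e_1e_2$. Letting $M$ denote the midpoint of $e_1e_2$, the triangles $Me_1a$, $Me_1b$, $Me_1q$ are right-angled at $M$ (since $Me_1\perp H$), each with hypotenuse of length $r$, so $|Ma|=|Mb|=|Mq|=\sqrt{r^2-|Me_1|^2}$; thus $a,b,q$ lie on a common circle $C$ in $H$ centered at $M$.

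Next I would translate the hypothesis ``$\omega$ is smaller than a semicircle'' into a statement about the positions of $a,b,q$ on $C$. Working in the plane $\Pi$ containing $C_{ab}$, the ball $B_r(q)$ intersects $\Pi$ in a disk centered at the orthogonal projection $q'$ of $q$ onto $\Pi$; a short computation (using $|O_{ab}q|^2=|O_{ab}q'|^2+|qq'|^2$ and analyzing when the chord of $C_{ab}\cap \bd B_r(q)$ lies on the far side of $O_{ab}$ from $q'$) shows that the arc of $C_{ab}$ inside $B_r(q)$ is longer than a semicircle if and only if $|O_{ab}q|<|ab|/2$, where $O_{ab}$ is the midpoint of $ab$. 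By Thales' theorem this is equivalent to $\angle aqb>\pi/2$, and, via the inscribed angle theorem applied to $C$, it is equivalent to $q$ lying on the \emph{minor} arc of $C$ between $a$ and $b$.

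To finish, I would invoke the separation by $\lambda$. Since $e_1,e_2$ are assumed to lie to the left of $\lambda$, so does their midpoint $M$; meanwhile $a,b\in P_R$ lie to the right of $\lambda$. Let $\ell=\lambda\cap H$. Because the center $M$ of $C$ lies strictly on the opposite side of $\ell$ from $a$ and $b$, the arc of $C$ on the right side of $\ell$ is the \emph{minor} cap of $C$ cut off by $\ell$, and in particular has angular measure strictly less than~$\pi$. The sub-arc of $C$ between $a$ and $b$ that is contained in this cap is therefore shorter than a semicircle, so it coincides with the minor arc between $a$ and $b$, and hence the minor arc lies entirely to the right of $\ell$, i.e., of $\lambda$. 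But $q$ lies on this minor arc, forcing $q$ to lie to the right of $\lambda$, contradicting $q\in P_L$.

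The main obstacle I anticipate is the translation step in the middle paragraph: the condition ``$\omega$ is smaller than a semicircle of $C_{ab}$'' has to be rephrased in terms of the rather different circle $C$, which lives in the perpendicular bisector plane of $e_1e_2$ and passes through all three of $a,b,q$. Once both halves of the argument live on this single circle $C$, the separation by $\lambda$ closes the argument immediately.
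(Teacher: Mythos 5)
Your proof is correct, and it shares the same underlying geometric picture as the paper's: both pivot on the circumcircle of $\triangle abq$ and on the fact that the two endpoints of $\omega$ lie symmetrically about its circumcenter, on the line normal to the plane of $a,b,q$. Indeed your plane $H$ (the perpendicular bisector plane of $e_1e_2$) necessarily coincides with the paper's plane $h$ through $a,b,q$, since both planes contain the non-collinear triple $a,b,q$; so your $M$, $C$, $e_1$, $e_2$ are the paper's circumcenter $w$, circumcircle $Q$, and endpoints $z$, $z'$.

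Where you diverge is the chain of reasoning. The paper argues forward: it deduces from the hypothesis that $c_{ab}$ is closer to $q$ than to $a$ or $b$, then asserts (rather tersely, as ``easily seen'') that this, together with the separation of $a,b$ from $q$, places $w$ to the right of $\lambda$, whence at least one of $z,z'$ is to the right. You argue by contradiction: if both endpoints lie to the left, so does their midpoint $M=w$; on the other hand you rephrase the hypothesis --- via a radical-axis computation and then Thales' and the inscribed-angle theorems --- as ``$q$ lies on the minor arc of $C$ between $a$ and $b$''; and since $M$ is strictly to the left of $\ell = \lambda\cap H$ while $a,b$ lie to its right, the cap of $C$ to the right of $\ell$ has angular measure $<\pi$ and must contain the minor arc between $a$ and $b$, hence $q$, a contradiction. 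The translation you flag as the ``main obstacle'' checks out: writing $\rho$ for the radius of $C_{ab}$ and $r_D,\,d$ for the radius of $B_r(q)\cap\Pi$ and the in-plane distance $|c_{ab}q'|$, the arc of $C_{ab}$ inside $B_r(q)$ exceeds a semicircle iff $\rho^2+d^2<r_D^2$, which after substituting $\rho^2=r^2-(|ab|/2)^2$, $r_D^2=r^2-|qq'|^2$, and $|c_{ab}q|^2=d^2+|qq'|^2$ reduces exactly to $|c_{ab}q|<|ab|/2$, i.e.\ $\angle aqb>\pi/2$. This is precisely the fact the paper encodes in its observation that $c_{ab}$ lies in the Voronoi cell of $q$ within $h$, but your version is explicit and elementary, and your closing cap argument cleanly replaces the paper's ``$wq$ crosses $ab$'' step. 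A valid and somewhat more transparent rendering of essentially the same geometry.
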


\begin{proof}The situation and its analysis are depicted in Figure~\ref{figure:lemma_setup}. To slightly simplify the analysis, and without loss of generality, assume that $r = 1$. Let $h$ be the plane passing through $a$, $b$ and $q$. Let $c_{ab}$
denote the midpoint of $ab$, and let $w$ denote the center of the
circumscribing circle $Q$ of $\triangle qab$. Denote the distance $|ab|$
by $2x$, and the radius of $Q$ by $y$ (so $|wp_1|=|wp_2|=|wq|=y$).
Note that $c_{ab}$ and $w$ lie in $h$ and that $y \ge x$.
Observe that $c_{ab}$ is the center of the intersection circle $C_{ab}$ of
$\bd{B_r(a)}$ and $\bd{B_r(b)}$. See Figure~\ref{figure:lemma_setup}(a).

\begin{figure}[htbp]
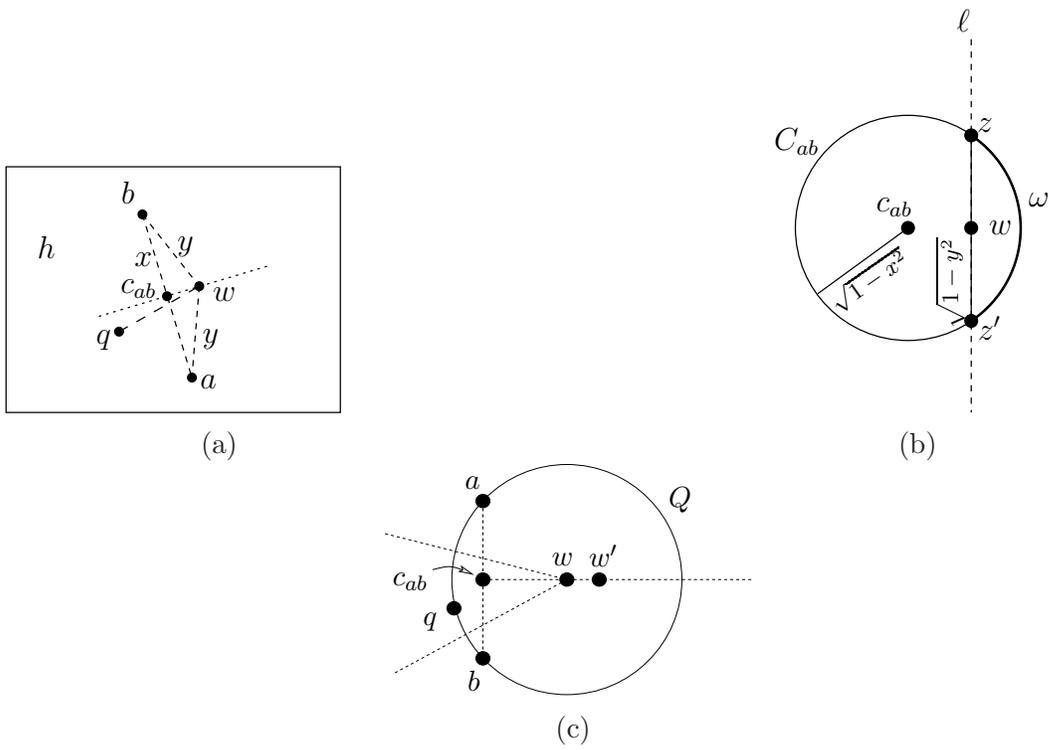

\begin{center}
\begin{tabular*}{0.8\textwidth}{c c c}
            \hspace{-35pt} {\input{h_setup_s2.pstex_t} } &  & {\input{C_ab_s2.pstex_t} } \\
    \small (a) & \hspace{15pt} & \small (b) \\
    \hspace{45pt} & {\input{abq_s2.pstex_t} } & \hspace{45pt}\\
     \hspace{-35pt} & \small (c) &
           \end{tabular*}

       \vspace{20pt}
\caption{\small \sf The setup in Lemma~\ref{lemma:quadratic_K_P_L}:
           (a) the setup within the plane $h$; (b) the setup within
    $C_{ab}$; (c) $ww'$ lies on the bisector of $ab$ in
           the direction that gets away from $q$.}

\label{figure:lemma_setup}
\end{center}
\end{figure}

The intersection points $z,z'$ of $C_{ab}$ and $\bd{B_r(q)}$ are the
intersection points of the three spheres
$\bd{B_r(a)}$, $\bd{B_r(b)}$, and $\bd{B_r(q)}$.
They lie on the line $\ell$ passing through $w$ and orthogonal
to $h$, at equal distances $\sqrt{1-y^2}$ from $w$. See Figure~\ref{figure:lemma_setup}(b).
(If $y>1$ then $z$ and $z'$ do not exist, in which case $C_{ab}$ does not intersect $\bd{B_r(q)}$; in what follows we assume
that $y\le 1$.)
Hence, within $C_{ab}$, $zz'$ is a chord of length $2\sqrt{1-y^2}$.
In the assumed setup, $z$ and $z'$
delimit a short arc $\omega$ of $C_{ab}$, which lies outside $B_r(q)$,
so points on the arc are (equally) closer to $a$ and $b$ than to
$q$.

Hence, the projection of the arc $\omega$ onto $h$ is a small
interval $ww'$, which lies on the bisector of $ab$ in the
direction that gets away from $q$; that is, it lies on the Voronoi
edge of $ab$ in the diagram $\Vor(\{a,b,q\})$ within $h$. See Figure~\ref{figure:lemma_setup}(c).
Moreover, $c_{ab}$ also lies on the bisector, but it has to lie on the
other side of $w$, or else the smaller arc $\omega$ would have to lie
inside $B_r(q)$. That is, $c_{ab}$ has to be closer to $q$ than to $a$
and $b$. Since $\lambda$ separates $a$ and $b$ from $q$, it also separates
 $c_{ab}$ from $q$. Moreover, the preceding arguments are easily seen to imply that $wq$ crosses $ab$ (as in Figure~\ref{figure:lemma_setup}(a)), which implies that $\lambda$ also separates $q$ and $w$, so
$w$ has to lie to the right of $\lambda$. Since $z$ and $z'$ lie on
two sides of $w$ on the line $\ell$, at least one of them has to lie on
the same side of $\lambda$ as $w$ (i.e., to the right of
$\lambda$). This completes the proof.
\end{proof}

Let $a,b \in P_R$ and consider those arcs of $C_{ab}$ which lie outside $K(P_L)$ but their endpoints lie on $\sigma_L$. Clearly, all these arcs are pairwise disjoint. At most one such arc can be larger than a semicircle. Let $\omega$ be an arc of this kind which is smaller than a semicircle, and let $q \in P_L$ be such that one endpoint of $\omega$ lies on $\bd B_r(q)$. Then $\omega' = C_{ab} \setminus B_r(q)$ is contained in $\omega$ and therefore is also smaller than a semicircle. By Lemma~\ref{lemma:quadratic_K_P_L}, exactly one endpoint of $\omega'$ lies to the right of $\lambda$ (the other endpoint lies on $\sigma_L$).
Note that $C_{ab}$ cannot have more than two such short arcs lying outside $K(P_L)$, since, due to the convexity of $C_{ab}$, only two arcs of $C_{ab}$ can have their two endpoints lying on opposite sides of $\lambda$. Hence the number of arcs of $C_{ab}$ under consideration is at most 3, implying that $\gamma_a$ and $\gamma_b$ intersect at most three times, and thus the complexity of $M$ is $O(n^2)$, as asserted.

\smallskip
\noindent{\bf Constructing and searching $M$.} The next step of the
algorithm is to compute $M$. We have already constructed
$\bd{K(P_L)}$, in $O(n \log n)$ time, and, in additional linear
time, we can compute its portion $\sigma_L$ to the left of $\lambda$
(we omit the straightforward details). We compute the intersection
curve $\gamma_p$ of $B_r(p)$ and $\sigma_L$, for each $p \in P_R$,
in $O(n \log n)$ time, by computing the intersection $K(P_L \cup
\{p\})$, and obtaining the curve which bounds the portion of the
unique face of $\bd{K(P_L \cup \{p\})}$ within $\sigma_L$. If
necessary, we also split $\gamma_p$ into portions, such that each
portion is contained in a different face of $\sigma_L$. The total
cost of computing all curves $\{\gamma_p \mid p \in P_R\}$, and
spreading them along the faces of $\sigma_L$, is $O(n^2 \log n)$.
Then, for each face $f$ of $\sigma_L$, we consider the portions of
all the arcs $\gamma_p$, for $p \in P_R$, within $f$, and compute
their arrangement (which is the portion of $M$ which lies in $f$).
To this end, we use standard line-sweeping~\cite{book}, to report
all the intersections of $n$ curves in the plane in $O((n+k) \log
n)$ time, where $k = k_f$ is the complexity of the resulting
arrangement on $f$. Hence, the total cost of computing the portion
of $M$ on all the faces of $\sigma_L$ is $\sum_{f \in \sigma_L}
O((n+k_f) \log n) = O(n^2 \log n) + O(\log n) \cdot \sum_{f \in
\sigma_L} k_f = O(n^2 \log n)$, since the complexity of $M$ is
$O(n^2)$.

We next perform a traversal of the cells of $M$ in a manner similar to the one used in Section~\ref{sec:cubic_alg}, via a tour, which proceeds from each visited cell to an adjacent one. For each cell $\tau$ that we visit, we place
the center $c_1$ of $B_1$ in $\tau$, and maintain dynamically the subset
$P_\tau^+$ of points of $P$ not covered by $B_1$. (Here, unlike the algorithm of Section~\ref{sec:cubic_alg}, the complementary set  $P_\tau^-$ is automatically covered by $B_1$ and there is no need to test it.) As before,
when we move
from one cell $\tau$ to an adjacent cell $\tau_1$, $P_{\tau_1}^+$ gains
one point or loses one point. This implies that this tour generates only $O(n^2)$ connected life-spans of the points of $P$, where a life-span of a point $p$ is a maximal connected interval of the tour, in which $p$ belongs to $P_\tau^+$. We can thus use a segment tree $T_M$ to store these life-spans, as before. Each
leaf $u$ of $T_M$ represents a cell $\tau$ of $M$, and the balls not containing $\tau$ are those with life-spans that are stored at the nodes on the path from the root to $u$. Since $M$ has a quadratic number of
cells, $T_M$ has a total
of $O(n^2)$ leaves. Arguing exactly as in Section~\ref{subsec:decision_procedure}, we can compute $T_M$ in overall $O(n^2 \log^2 n)$ time, and the total storage used by $T_M$ is $O(n^2 \log n)$.

As in Section~\ref{subsec:decision_procedure}, we next test, for
each leaf $u$ of $T_M$, whether the spherical polytopes along the
path from the root to $u$ have non-empty intersection. We do this
using the parametric search technique described in
Proposition~\ref{prop:intersection_test}, which takes $O(\log^5 n)$
time for each path, for a total of $O(n^2 \log^5 n)$. More
precisely, as above, we also need to distinguish between $r = r^*$
and $r > r^*$. We therefore stop only when both the intersection
along the path and the cell of $\sigma_L$ corresponding to $u$ are
non-degenerate, and then report that $r^* < r$. Otherwise, we
continue running the above procedure over all paths of $T_M$, and
repeat it for each of the $O(1/\beta^3)$ combinations of an
orientation $v$ and a separating plane $\lambda$. If we find at
least one (degenerate\footnote{Note that $\bigcap\{B_r(p) \mid p \in
P_\tau^-\}$ is non-degenerate if $\tau$ is a 2-face or an edge. If
$\tau$ is a vertex we test for degeneracy as in the procedure in
Section~\ref{subsec:decision_procedure}. Determining whether
$\bigcap\{B_r(p) \mid p \in P_\tau^+\}$ is degenerate is also
performed using that procedure.}) solution, we report that $r^* =
r$, and otherwise conclude that $r^*
> r$. Hence, the cost of handling Case~2, and thus also the overall
cost of the decision procedure, is $O((1/\beta^3) n^2 \log^5 n)$.

\subsection{Solving the optimization problem}
\label{sec:separated_centers_optimization}

We now combine the decision procedure $\Gamma$ described in Section~\ref{subsec:improved_decision_procedure} with the randomized optimization
technique of Chan~\cite{TCG} (as briefly described in Section~\ref{subsec:optimization_problem}),
to obtain a solution for the optimization problem.

The decision procedure $\Gamma$, on a specified radius $r$, relies on an apriori knowledge of a lower bound $\beta$ for the separation ratio
$|c_1c_2|/r$. To supply such a $\beta$, let $r_0$ denote the radius of the smallest
enclosing ball of $P$, and observe that if there exist two balls $B_1, B_2$ of radius $r$ covering $P$ then the smallest ball $B^*$ enclosing $B_1 \cup B_2$ must be at least as large as the smallest enclosing ball of $P$, so its radius must be at least $r_0$. Since this radius is
$(1+\beta/2)r$ (see Figure~\ref{figure:new_case2_SEB}), we have
$(1+ \beta/2)r \geq r_0$ or $\beta \geq 2(r_0/r -1)$.
It follows that the running time of the decision procedure $\Gamma$ is
$$O\left(\frac{1}{\beta^3} n^2 \log^5 n\right) =
O\left(\frac{1}{\left(1-r/r_0 \right)^3} n^2 \log^5 n\right).$$

\begin{figure}[htbp]
\begin{center}
\input{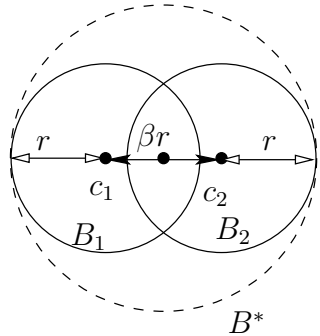}
\caption{\small \sf The smallest enclosing ball $B^*$ of $B_1 \cup
B_2$.} \label{figure:new_case2_SEB}
\end{center}
\end{figure}

Chan's technique starts with a very big $r$ (for all practical
purposes we can start with $r = r_0$) and shrinks it as it iterates
over the subproblems. Therefore, running Chan's technique in a
straightforward manner, starting with $r = r_0$, will make it
potentially very inefficient, because the initial executions of
$\Gamma$, when $r$ is still close to $r_0$, may be too expensive due
to the large constant of proportionality (not to mention the run at
$r_0$ itself, which the algorithm cannot handle at all). We need to
fine-tune Chan's technique, to ensure that we do not consider values
of $r$ which are too close to $r_0$. To do so, we consider the
interval $(0, r_0)$ which contains $r^*$, and run an ``exponential
search'' through it, calling $\Gamma$ with the values $r_i = r_0
\left(1-1/2^i\right)$, for $i = 1,2, \ldots$, in order, until the
first time we reach a value $r' = r_i \geq r^*$. Note that $1-
r'/r_0 = 1/2^i$ and $1/2^i < 1- r^*/r_0 < 1/2^{i-1}$, so our lower
bound estimates for the separation ratio $\beta$ at $r'$ and at
$r^*$ differ by at most a factor of $2$, so the cost of running
$\Gamma$ at $r'$ is asymptotically the same as at $r^*$. Moreover,
since the (constants of proportionality in the) running time bounds
on the executions of $\Gamma$ at $r_1,\ldots,r_i$ form a geometric
sequence, the overall cost of the exponential search is also
asymptotically the same as the cost of running $\Gamma$ at $r^*$. We
then run Chan's technique, with $r'$ as the initial minimum radius
obtained so far. Hence, from now on, each call to $\Gamma$ made by
Chan's technique will cost asymptotically no more than the cost of
calling $\Gamma$ with $r'$ (which is asymptotically the same as
calling $\Gamma$ with $r^*$).

\paragraph{Combining Chan's technique with the decision procedure $\Gamma$.}
To apply Chan's technique with our decision procedure, we use the
same cutting-based decomposition as in
Section~\ref{subsec:optimization_problem}. That is, we replace each
point $p \in P$ by its dual plane $p^* \in P^*$, and construct a
$(1/\varrho)$-cutting of $\A(P^*)$, for some sufficiently large
constant parameter $\varrho>0$. We then apply Chan's technique to
the resulting subproblems (where each subproblem corresponds to a
simplex $\Delta_i$ of the cutting), using the improved decision
procedure $\Gamma$ on each of them, and recursing into some of them,
as required by the technique. As in Section~\ref{sec:cubic_alg}, the
recursion and the application of the decision procedure are not
``pure'', because they need to consider also those planes that miss
the current simplex. (Note that in the problem decomposition we use,
for simplicity, the full 3-dimensional arrangement $\A(P^*)$, of
cubic size. This, however, does not affect the asymptotic running
time, because we have only a constant number of subproblems, and
Chan's technique recurses into only an expected logarithmic number
of them.) Given a radius $r$, we compute the lower bound $\beta =
2\left(\frac{r_0}{r} -1\right)$ for the separation ratio
$\frac{|c_1c_2|}{r}$, where $c_1, c_2$ are the centers of the two
covering balls, as above. Consider the application of $\Gamma$ to a
subproblem represented by a simplex $\Delta_i$ of the cutting. The
presence of ``global'' points (those dual to planes passing above or
below $\Delta_i$) forces us, as in
Section~\ref{subsec:optimization_problem}, to modify the ``pure''
version of $\Gamma$ described above. We use the same notations as in
Section~\ref{sec:cubic_alg}.

\begin{figure}[htbp]
\begin{center}
\input{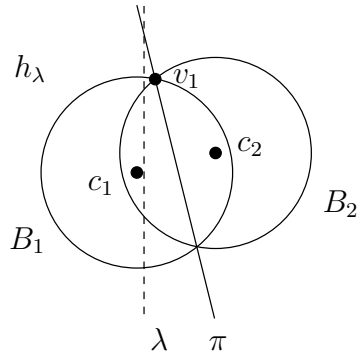}

\caption{\small \sf $h_\lambda$ does not contain any point of
$P_{\Delta_i}^+$.} \label{figure:h_lambda}
\end{center}
\end{figure}

We again rotate the coordinate axes, in $O\left(1/\beta^2\right)$
ways (in the same manner as in the ``pure'' decision procedure), and
draw $O(1/\beta)$ $yz$-parallel planes, such that, at the correct
orientation, one of these planes, $\lambda$, separates $c_1$ from
$v_1$ (if there is a solution for $r$). As in the pure case, we may
assume that the $x$-span of $P$ is at most $5r$; a larger span is
handled earlier. We assume, without loss of generality, that
$P_{\Delta_i}^- \subseteq B_1$, and that $P_{\Delta_i}^+ \subseteq
B_2$. Recall also that the points in the left halfspace $h_\lambda$
bounded by $\lambda$ are all contained in $B_1$. Moreover, the plane
$\pi$ containing the intersection circle $C_{12}$ is dual to a point
$\pi^*$, which has to separate $(P^*)_{\Delta_i}^+$ from
$(P^*)_{\Delta_i}^-$. Hence, all the points of $P_{\Delta_i}^+$ have
to lie on the other side of $\pi$, and in $B_2$, which is easily
seen to imply that none of them can lie in $h_\lambda$. See
Figure~\ref{figure:h_lambda}. We thus verify that $P_{\Delta_i}^+
\cap h_\lambda = \emptyset$, aborting otherwise the guess of
$\lambda$. (Note that, in contrast, points of $P_{\Delta_i}^-$ can
also lie to the right of $\lambda$.)

We now have a subset $P_L \subseteq P_{\Delta_i}^0$ of $O(m)$ points
to the left of $\lambda$, which are assumed, together with the
points of $P_{\Delta_i}^-$, to be contained in $B_1$. Note however
that, for Lemma~\ref{lemma:quadratic_K_P_L} to hold, we have to
define $\sigma_L$ only in terms of the points to the left of
$\lambda$. Therefore, we compute the surface $\sigma_L' = \bd{K(P_L
\cup (P_{\Delta_i}^- \cap h_\lambda)) \cap h_\lambda}$ and search on
it for a placement of the center $c_1$ of $B_1$. However, since the
remaining points of $P_{\Delta_i}^-$ are also assumed to belong to
$B_1$, we need to consider only the portion of $\sigma_L'$ inside
$\bigcap\{B_r(p) \mid p \in P_{\Delta_i}^- \setminus h_\lambda\}$.
Let $\sigma_L''$ denote this portion. It is easy to compute
$\sigma_L''$ in $O(n \log n)$ time. It is easily checked that $c_1$
must lie on $\sigma_L''$ (if there is a solution for the current
situation). So far, the cost of the decision procedure also depends
(cheaply --- see below) on the initial input size $n$, but the
saving in this setup comes from the fact that it suffices to
intersect the $O(m)$ spheres $\bd{B_r(p)}$, for $p \in
P_{\Delta_i}^0 \setminus h_\lambda$, with $\sigma_L''$ to obtain the
map $M$, since only the points of $P_{\Delta_i}^0$ are
``undecided''. (The points of $P_{\Delta_i}^+$ are always placed in
$B_2$ as already discussed.)

Note that $\sigma_L''$ need not to be connected, so it may seem
impossible to visit all the cells of $M$ in a single connected tour.
Nevertheless, we will be able to do it, in a manner detailed below.
We thus build a segment tree $T_M$ to maintain the subset $P'(c_1)$
of points of $P$ not covered by $B_1$. We build and query $T_M$ as
is done in Section~\ref{subsec:decision_procedure}, except for the
following modifications. First, note that the points of
$P_{\Delta_i}^+$ are assumed to be contained in $B_2$. Thus, the
points of $P_{\Delta_i}^+$, that in the decision procedure were
considered in building $M$, do not need to be considered as part of
$M$ now, rather it is enough to build the spherical polytope
$\bigcap \{B_r(p) \mid p \in P_{\Delta_i}^+ \}$ and place it at the
root of $T_M$. Second, we claim that $M$ is of complexity $O(m n)$.
To see this, let $\C^0$ denote the set of curves $\{\bd{B_r(p)} \cap
\sigma_L'' \mid p \in P_{\Delta_i}^0\}$. Each pair of curves of
$\C^0$ can intersect each other in only a constant number of points,
as proved in Section~\ref{subsec:improved_decision_procedure}.
Hence, the complexity of the arrangement of the $O(m)$ curves in
$\C^0$, formed on $\sigma_L''$, is $O(m^2)$. However, $\sigma_L''$
itself is of complexity $O(n)$, and each edge of $\sigma_L''$ may
intersect the curves of $\C^0$ at $O(m)$ points. Hence, the
complexity of the map $M$ is $O(m n)$, but the number of its
vertices that lie in the interior of the faces of $M$ is only
$O(m^2)$.

To overcome the possible disconnectedness of $\sigma_L''$, we
proceed as follows. We consider the (connected) network of the
$O(n)$ edges of $\sigma_L'$, and intersect each of these edges with
the $m$ balls $B_r(p)$, for $p \in P_{\Delta_i}^0$. We construct a
tour of this network, which visits $O(mn)$ arcs along the edges of
$\sigma_L'$, and append to this ``master tour'' separate tours of
each face of $\sigma_L''$. We get in this way a single grand tour of
the cells of $M$ (which also traverses some superfluous arcs of
$\sigma_L' \setminus \sigma_L''$), of length $O(mn)$, which has the
incremental property that we need: Moving from any cell or arc of
the tour to a neighbor cell or arc incurs an insertion or a deletion
of a single point into/from $P'(c_1)$.

\paragraph{Running time.}
For each cell of $M$ we run the procedure described in
Proposition~\ref{prop:intersection_test} for determining whether the
intersection of the corresponding spherical polytopes is nonempty
(and whether it has nonempty interior). Therefore, solving each
subproblem requires $O(m n \log^5 n)$ time. The $O(m n \log n)$ time
required to build $M$, and the $O(n \log n)$ time required to
construct the intersection of the balls in $\{B_r(p) \mid p \in
P_{\Delta_i}^+\}$, are all subsumed in that cost. Repeating this for
each of the $O(1/\beta^3)$ guesses of an orientation and a
separating plane, results in $O\left((1/\beta^3) mn \log^5 n\right)$
rnning time. When the recursion bottoms out, we handle it the same
way as in Section~\ref{subsec:optimization_problem}.

Arguing similarly to the less efficient solution, we obtain the
following recurrence for the maximum expected cost $T(m, n)$ of
solving a recursive subproblem involving $m$ ``local'' points, where
$n$ is the number of initial input points in $P$.

\begin{equation}
\label{eqn:T(m,n)_2} T(m,n) \leq \left\{ \begin{array}{ll}
\ln(c\varrho^3)T(m/\varrho,n) + O\left((1/\beta^3) m n \log^5 n\right),   & \mbox{for $m \geq \varrho$,}\\
O(n),                                        & \mbox{for $m <
\varrho$,}
\end{array}\right.
\end{equation}
where $c$ is an appropriate absolute constant (as in
Section~\ref{subsec:optimization_problem}), $\varrho$ is the
parameter of the cutting, chosen to be a sufficiently large constant
(to satisfy (\ref{equation:lnr}), as above, with $\gamma = 2$), and
$\beta = 2\left(r_0/r'-1\right)$, where $r'$ is the value of $r$ at
which the initial exponential search is terminated.

It can be shown rather easily (and we omit the details, as we did in
the preceding section), that the recurrence~(\ref{eqn:T(m,n)_2})
yields the overall bound $O\left((1/\beta^3) n^2 \log^5 n\right)$ on
the expected cost of the initial problem; i.e.,
$$T(n,n) = O\left((1/\beta^3) n^2 \log^5 n\right).$$ We thus finally obtain
our main result:
\begin{theorem}
Let $P$ be a set of $n$ points in $\mathbb{R}^3$. A 2-center for $P$
can be computed in $O((n^2 \log^5 n)/(1-r^*/r_0)^3 )$ randomized
expected time, where $r^*$ is the radius of the balls of the
2-center for $P$ and $r_0$ is the radius of the smallest enclosing
ball of $P$.
\end{theorem}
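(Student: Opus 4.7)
The plan is to assemble the theorem from three already-prepared ingredients: (i) the improved decision procedure $\Gamma$ of Section~\ref{subsec:improved_decision_procedure}, which runs in $O((1/\beta^3)\,n^2\log^5 n)$ time whenever it is supplied with a lower bound $\beta$ on the separation ratio $|c_1c_2|/r$; (ii) the observation that if two balls of radius $r$ cover $P$, then the smallest ball enclosing their union has radius $(1+\beta/2)r \ge r_0$, so one can always use $\beta = 2(r_0/r - 1)$; and (iii) Chan's randomized optimization technique, combined with a cutting-based decomposition in the dual. The first task is to preprocess by computing $r_0$ in $O(n)$ time via the LP-type algorithm for the smallest enclosing ball.

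Next, I would run an ``exponential search'' on $r$ to produce a safe initial value $r'$ with $r^* \le r' < r_0$ whose separation estimate is within a factor of $2$ of the one at $r^*$. Concretely, call $\Gamma$ with $r_i = r_0(1-1/2^i)$ for $i=1,2,\dots$ until the first success $r' = r_i \ge r^*$; since $1-r'/r_0 = 1/2^i$ and $1/2^i < 1-r^*/r_0 < 1/2^{i-1}$, the costs of the successive calls form a geometric series dominated by the final one, which is $\Theta((1-r^*/r_0)^{-3}\,n^2\log^5 n)$. This step is essential because a naive start at $r=r_0$ would blow up the constant (or be altogether undefined when $\beta \to 0$).

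The heart of the proof is then to plug the (modified) $\Gamma$ into Chan's framework on the dual plane arrangement $\A(P^*)$, refined by a $(1/\varrho)$-cutting of size $O(\varrho^3)$ (Lemma~\ref{lemma:cutting}) for a sufficiently large constant $\varrho$. Each simplex $\Delta_i$ of the cutting yields a subproblem with $m \le n/\varrho$ ``undecided'' planes and carries the inherited ``global'' sets $P_{\Delta_i}^+, P_{\Delta_i}^-$; the non-pure variant of $\Gamma$ described above is applied to each such subproblem, at cost $O((1/\beta^3)\,mn\log^5 n)$. Ordering the subproblems randomly, Chan's technique issues only an expected $O(\log s)$ recursive calls per level, giving the recurrence
\begin{equation*}
T(m,n) \le \ln(c\varrho^3)\,T(m/\varrho,n) + O\!\left((1/\beta^3)\,mn\log^5 n\right),
\end{equation*}
with $T(m,n)=O(n)$ for $m<\varrho$. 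Condition (\ref{equation:lnr}) is satisfied with $\gamma = 2$ for $\varrho$ large enough, so the linear-in-$m$ driving term dominates, and a standard unfolding gives $T(n,n) = O((1/\beta^3)\,n^2\log^5 n)$ with $\beta = 2(r_0/r'-1)$.

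The main obstacle I expect is not the recurrence itself, which is routine once (\ref{equation:lnr}) is verified, but the careful non-pure adaptation of $\Gamma$: one must make sure that $\sigma_L$ (and in the recursive setting $\sigma_L''$) is defined purely from points to the left of $\lambda$ so that Lemma~\ref{lemma:quadratic_K_P_L} keeps applying, check that $P_{\Delta_i}^+ \cap h_\lambda = \emptyset$ (aborting the guess otherwise), restrict the arrangement $M$ to come from the $O(m)$ undecided spheres so that the per-subproblem bound scales with $m$ rather than $n$, and stitch a single grand tour across the possibly disconnected pieces of $\sigma_L''$ so that the segment-tree machinery of Section~\ref{subsec:decision_procedure} still supports $O(1)$ incremental updates. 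Once these points are verified, combining the exponential search, the cutting-based decomposition, and the resolution of the recurrence yields the claimed bound $O((n^2\log^5 n)/(1-r^*/r_0)^3)$ on the expected running time.
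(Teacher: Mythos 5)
Your proposal reproduces the paper's own proof essentially step for step: the same exponential search producing $r'$ with a separation estimate within a factor of two of that at $r^*$, the same dual cutting-based decomposition plugged into Chan's technique with the non-pure adaptation of $\Gamma$, and the same recurrence~(\ref{eqn:T(m,n)_2}) solved under condition~(\ref{equation:lnr}) with $\gamma=2$. The implementation obstacles you flag (defining $\sigma_L''$ from left points only, checking $P_{\Delta_i}^+ \cap h_\lambda = \emptyset$, restricting $M$ to the $O(m)$ undecided spheres, and stitching a grand tour) are exactly the ones the paper addresses, so this is a faithful reconstruction rather than a new route.
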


\section{Efficient Emptiness Detection of Intersection of Spherical
Polytopes} \label{sec:spherical_polytopes} In this section we
describe an efficient procedure for testing emptiness (and
non-degeneracy) of the intersection of spherical polytopes, as
prescribed in Proposition~\ref{prop:intersection_test}. Let $\S$ be
a collection of spherical polytopes, each defined as the
intersection of at most $n$ balls of a fixed radius $r$. Fix a
spherical polytope $S \in \S$. To simplify the forthcoming analysis,
we assume that the centers of the balls involved in the polytopes of
$\S$ are in general position, meaning that no five of them are
co-spherical, and that there exists at most one quadruple of centers
lying on a common sphere of radius $r$. As is well known, each ball
$b$ participating in the intersection $S$ contributes at most one
(connected) face to $\bd{S}$ (see~\cite{ER}). The vertices and edges
of $S$ are the intersections of two or three bounding spheres,
respectively (at most one vertex might be incident to four spheres).
Hence $\bd{S}$ is a planar (or, rather, spherical) map with at most
$|S|$ faces, which implies that the complexity of $\bd{S}$ is
$O(|S|)$.

We preprocess $S$ into a point-location structure. We first
partition $\bd{S}$ into its upper portion $\bd{S}^+$ and lower
portion $\bd{S}^-$. We project vertically each of $\bd{S}^+$ and
$\bd{S}^-$ onto the $xy$-plane and obtain two respective planar maps
$M^+$ and $M^-$ (see Figure~\ref{figure:projection}). For each face
$\zeta$ of each map we store the ball $b$ that created it; that is,
$\zeta$ is the projection of the (unique) face of $\bd{S}$ that lies
on $\bd{b}$. The $xy$-projection $S^*$ of $S$ is equal to both
projections of $\bd{S^+}$, $\bd{S^-}$, and is bounded by a convex
curve $E^*$ that is the concatenation of the $xy$-projections of
certain edges of $S$ and of portions of horizontal equators of some
of its balls.

\begin{figure}[htbp]
\begin{center}
\input{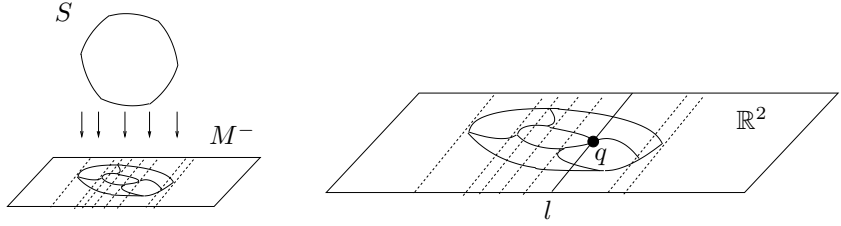}
\caption{\small \sf Projecting $\bd{S_i}^-$ vertically onto the
$xy$-plane (left), and the point location structure for the
resulting map $M_i^-$ (right).} \label{figure:projection}
\end{center}
\end{figure}

We apply the standard point-location algorithm of Sarnak and
Tarjan~\cite{ST} to each of the maps $M^+, M^-$. That is, we divide
each planar map into slabs by parallel lines (to the $y$-axis)
through each of the endpoints (and locally $x$-extremal points) of
the arcs obtained by projecting the edges of $\bd{S}$, including the
new equatorial arcs. Using the persistent search structure
of~\cite{ST}, the total storage is linear in $|S|$ and the
preprocessing cost is $O(|S| \log |S|)$, where $|S|$ is the number
of balls forming $S$. To locate a point $q_0$ in $M^+$ (or in
$M^-$), we first find the slab in the $x$-structure that contains
$q_0$, and then find the two curves between which $q_0$ lies in the
$y$-structure.\footnote{\small All these standard details are
presented to make more precise the infrastructure used by the
higher-dimensional routines $\Pi_1$ and $\Pi_2$.}

To determine whether $q \in S^*$, we locate the face $\zeta^+$
(resp., $\zeta^-$) of the map $M^+$ (resp., $M^-$) that contains
$q$, as just described. Each of these faces can be a 2-face, an edge
or a vertex. We therefore retrieve a set $\B^+$ (resp., $\B^-$) of
the one, two, or three or four balls associated (respectively) with
the 2-face, edge or vertex containing $q$. (We omit here the easy
construction of witness balls when the faces $\zeta^+$ and $\zeta^-$
are not associated with any ball, that is, $q \notin S^*$.)

Let $\B$ denote the set $\B^+ \cup \B^-$. We observe that $q \in
S^*$ if and only if the $z$-vertical line $\lambda_q$ through $q$
intersects $S$. Moreover, we have, by construction, $\lambda_q \cap
S = \lambda_q \cap (\bigcap \B)$. Hence $q \in S^*$ if and only if
$s \coloneqq \lambda_q \cap (\bigcap \B) \neq \emptyset$. Clearly,
if we put $N = \sum_{S \in \S} |S|$, then the preprocessing stage
takes a total of $O(N \log n)$ time and requires $O(N)$ storage.

Next, let $S_1,\ldots,S_t$ be $t \leq \log n$ spherical polytopes of
$\S$, for which we want to determine whether $K = \bigcap_{i=1}^t
S_i$ is nonempty (and, if so, whether it has nonempty interior). We
solve this problem by employing a technique similar to the
multi-dimensional parametric searching technique of
Matou\v{s}ek~\cite{JM} (see also~\cite{AES, TCA, CMS, NPT}). We
solve in succession the following three subproblems, $\Pi_0(q)$,
where $q$ is a point in the $xy$-plane, $\Pi_1(l)$, where $l$ is a
$y$-parallel line in the $xy$-plane, and $\Pi_2$, over the entire
$xy$-plane. In the latter problem we wish to to determine whether
the $xy$-projection $K^*$ of $K$ is nonempty. During the execution
of the algorithm for solving $\Pi_2$, we call recursively the
algorithm for solving $\Pi_1(l)$, for certain $y$-parallel lines $l
\subset \mathbb{R}^2$, and we wish to determine whether $K^*$ meets
$l$. If so, then $\Pi_2$ is solved directly (with a positive
answer). Otherwise, we wish to determine which side of $l$, within
$\mathbb{R}^2$, can meet $K^*$ (since $K^*$ is convex, there can
exist at most one such side). The recursion bottoms out at certain
points $q \in l$, on which we run $\Pi_0(q)$ to determine whether
$K^*$ contains $q$. If so, then $\Pi_1(l)$ is solved directly (with
a positive answer). Otherwise, we determine which side of $q$,
within $l$, can meet $K^*$, and continue the search accordingly.

Our solutions to the subproblems $\Pi_k$, $0\leq k\leq 2$, are based
on generic simulations of the standard point-location machinery of
Sarnak and Tarjan~\cite{ST} mentioned above. In each of the
subproblems, if we find a point in $f \cap K^*$, for the respective
point, line, or the entire $xy$-plane $f$, we know that $K \neq
\emptyset$ and stop right away. If $f \cap K^* = \emptyset$, we want
to ``prove'' it, by returning a small set of \emph{witness balls}
$b_1,\ldots,b_y$, where, for each $j$, $b_j$ is one of the balls
that participates in some spherical polytope $S_i$ (so $b_j
\supseteq S_i$), so that their intersection $K_0 = \bigcap_{j=1}^y
b_j$ satisfies $f \cap K_0^* = \emptyset$ (where, as above, $K_0^*$
is the $xy$-projection of $K_0$). If $K_0 = \emptyset$ then $K =
\emptyset$ too and we stop. Otherwise (when $f$ is a line or a
point), $K_0$ determines the side of $f$ (within $\mathbb{R}^2$ if
$f$ is a line, or within the containing line $l$ if $f$ is a point)
that might meet $K^*$; the opposite side is asserted at this point
to be disjoint from $K^*$. We use this information to perform binary
search (or, more precisely, parametric search) to locate $K^*$
within the flat, from which we have recursed into $f$.  The
execution of the algorithm for solving $\Pi_2$ will therefore either
find a point in $K$ or determine that $K = \emptyset$, because it
has collected a small (as we will show, polylogarithmic) number of
witness balls, whose intersection, which has to contain $K$, is
found to be empty.

\paragraph{Solving $\Pi _0(q)$ for a point $q$.}
\label{subsec:Pi_0}
Here we have a point $q \in \mathbb{R}^2$ and we wish to determine
whether $q \in K^*$. To do so, we locate $q$ in each of the maps
$M_i^+$ (the $xy$-projection of $\bd S_i^+$) and $M_i^-$ (the
$xy$-projection of $\bd S_i^-$), for each $i=1,\ldots,t$. If $q$
lies outside the projection of at least one polytope $S_i$ then $q
\notin K^*$, and we return the witness balls that prove that $q
\notin S_i^*$. Otherwise, as explained above, each point location
returns a set $\B_i$ of $O(1)$ witness balls for $S_i$. We compute
the $t$ line segments $s_i = \lambda_q \cap (\bigcap\B_i)$, for each
$i = 1,\ldots,t$, where $\lambda_q$ is, as above, the $z$-vertical
line through $q$. We then have $K_0 \coloneqq \lambda_q \cap K =
\bigcap_{i=1}^t s_i$, so it suffices to compute this intersection
(in $O(t)$ time) and test whether it is nonempty. If $K_0$ is
nonempty, then we have found a point $q'$ in $K$. Otherwise, we
return the set $\B_0 = \bigcup \{\B_i \mid 1 \leq i \leq t\}$ of up
to $5 \log n$ balls as witness balls for the higher-dimensional step
(involving the $y$-parallel line containing $q$).

The time complexity for solving $\Pi_0(q)$ is $O(\log^2 n)$, since
it takes $O(\log n)$ time to compute, for each of the $O(\log n)$
spherical polytopes $S_i$, the intersection $\lambda_q \cap S_i$.

\paragraph{\bf Solving $\Pi _1(l)$ for a line $l$.}
\label{subsec:Pi_1}
Here we have a $y$-parallel line $l \subset \mathbb{R}^2$ and we
wish to determine whether $K^*$ meets $l$. We first locate $l$ in
each of the planar maps $M_i^+$ and $M_i^-$ of each $S_i$, and find
the slabs $\psi_i^+$ and $\psi_i^-$, which contain $l$ (in some
cases $l$ is the common bounding line of two adjacent slabs
$\psi_i'$ and $\psi_i''$ of $M_i^+$ or of $M_i^-$, so we retrieve
both slabs). We then run a binary search through the $y$-structure
of each of the obtained slabs to find a point in $K^* \cap l$, if
one exist. In each step of the search, within some fixed slab
$\psi_0$, we consider an arc $\gamma$ of the $y$-structure, and
determine whether $K^*$ meets $l$ above or below $\gamma$ (within
$\mathbb{R}^2$), assuming $K^* \cap l \neq \emptyset$. To this end,
we find the intersection point $q_0 = l \cap \gamma$, and run the
algorithm for solving $\Pi_0(q_0)$ (see Figure~\ref{figure:pi_1}).
If $q_0 \in K^*$, then we have found a point $q'$ in $K$, and we
immediately stop. Otherwise, we have a set $\B_0$ of up to $5 \log
n$ balls returned by the algorithm for solving $\Pi_0(q_0)$. We test
whether the $xy$-projection $K_0^*$ of $\bigcap \B_0$ intersects
$l$. If $K_0^* \cap l = \emptyset$, then (due to the convexity of
$K$) we know which side of $l$ (within $\mathbb{R}^2$) meets $K^*$,
and we return $\B_0$ as a set of witness balls for the
higher-dimensional (planar) step. Otherwise (again due to the
convexity of $K$), we know which side of $\gamma$, within $l$, meets
$K^*$, and we continue the search through the $y$-structure of
$\psi_0$ on this side. We continue the search in this manner, until,
for each $S_i$, we obtain an interval $\xi_i$ of $l$ between two
consecutive arcs of the $y$-structure of $\psi_0$, which meets $K^*$
(assuming $K^* \cap l \neq \emptyset$). Let $\Xi$ denote the
collection of all these intervals. Clearly, $K^* \cap l \subseteq
\bigcap \Xi$. We find the lowest endpoint $E^-$ among the top
endpoints of the intervals in $\Xi$ and the highest endpoint $E^+$
among the bottom endpoints of the intervals in $\Xi$, and test
whether $E^-$ is above $E^+$. If so, we consider the set $\B_1$ of
up to $10 \log n$ witness balls returned by the algorithms for
solving $\Pi_0(E^-)$ and $\Pi_0(E^+)$. If the $xy$-projection
$K_1^*$ of $\bigcap \B_1$ intersects $l$, then $K^*$ meets $l$ and
we stop immediately, for we have found that $K$ is nonempty.
Otherwise, we know which side of $l$ (within $\mathbb{R}^2$) can
meet $K^*$, and we return $\B_1$ as a set of witness balls for the
higher (planar) recursive level. If $E^-$ is not above $E^+$, then
$K^* \cap l = \emptyset$ and we return $\B_1$ as a set of witness
balls for the higher (planar) recursive level as well.\footnote{With
some care, the number of witness balls can be significantly reduced.
We do not go into this improvement, because handling the witness
balls is an inexpensive step, whose cost is subsumed by the cost of
the other steps of the algorithm.}

\begin{figure}[htbp]
\begin{center}
\input{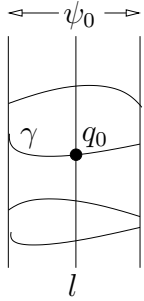}
\caption{\small \sf The line $l$ on which we run $\Pi_1(l)$. The
point $q_0$ on which we run $\Pi_0(q_0)$ is the intersection point
of $l$ with some arc $\gamma$.} \label{figure:pi_1}
\end{center}
\end{figure}

A naive implementation of the above procedure takes $O(\log^4 n)$
time, since for each of the $O(\log n)$ spherical polytopes $S_i$ we
run a binary search through the $y$-structure of at most two slabs
of each of the maps $M_i^+$ and $M_i^-$, and in each of the binary
search steps, we run the algorithm for solving $\Pi_0(q_0)$ for some
point $q_0$. The other substeps take less time. However, we can
improve the running time by implementing it in a parallel manner and
simulating the parallel version sequentially with a smaller number
of calls to $\Pi_0$.

We only parallelize the binary searches through the $y$-structure of
each $M_i^+$ and $M_i^-$, since the other substeps take less time.
To this end, we use $O(\log n)$ processors, one for each of the
planar maps $M_i^+$ and $M_i^-$, and we run in parallel the binary
search through the $y$-structure of each planar map using $O(\log
n)$ parallel steps. In each parallel step we need to ``compare''
$O(\log n)$ arcs with $K^*$ (one arc for each of the planar maps
$M_i^+$, $M_i^-$). We therefore intersect each such arc with $l$ and
obtain a set $Q$ of $O(\log n)$ intersection points. We then run a
binary search through the points of $Q$ (to locate $K^*$) using
$\Pi_0$. This determines the outcome of the comparisons of each of
the arcs with $K^*$, and the parallel execution can proceed to the
next step. Applying this approach to each of the $O(\log n)$
parallel steps results in an $O(\log^3 n \log \log n)$-time
algorithm for solving $\Pi_1(l)$. However, we can slightly improve
this bound further using a simple variant of Cole's
technique~\cite{RC}. More precisely, in each parallel step we have a
collection $Q$ of $O(\log n)$ weighted points, one for each map,
which we need to compare with $K^*$. We select the (weighted) median
point $q_0$ of $Q$ and run $\Pi_0(q_0)$. This determines the
outcomes of the comparisons between $K^*$ and each of the points in
$Q$ which lie to the opposite side of $q_0$ to the side containing
$K^*$. Points in $Q$ which lie in the same side of $q_0$ as $K^*$,
in level $j$ of the parallel implementation, are given weight
$1/4^{j-1}$ and we try to resolve their comparison to $K^*$ in the
next step. An easy calculation (simpler than the one used by Cole)
shows that this method adds only $O(\log n)$ steps to the $O(\log
n)$ parallel steps of the searches, and now in each parallel step we
perform only one call to $\Pi_0$ (see~\cite{RC} for more details).
Therefore, the total running time of $\Pi_1(l)$ is $O(\log^3 n)$.

\paragraph{\bf Solving $\Pi _2$.}
\label{subsec:Pi_2}
We next consider the main problem $\Pi _2$, where we want to
determine whether $K^* \neq \emptyset$ (i.e., whether $K \neq
\emptyset$). We use parametric searching, in which we run the point
location algorithm that we used for solving $\Pi_0$, in the
following generic manner.

In the first stage of the generic point location, we run a binary
search through the slabs of each of the planar maps $M_i^+$ and
$M_i^-$, for $i=1,\ldots,t$. In each step of the search through any
of the maps, we take a line $l_0$ delimiting two consecutive slabs
of the map, and run the algorithm for solving $\Pi _1(l_0)$, thereby
deciding on which side of $l_0$ to continue the search. At the end
of this stage, unless we have already found a point in $K$ or
determined that $K$ is empty, we obtain a single slab in each map
that contains $K^*$.  Let $\psi$ denote the intersection of these
slabs, which must therefore contain $K^*$ (unless $K$ is empty). The
cost of this part of the procedure is $O(\log^5 n)$.

In the next stage of the generic point location, we consider each
map $M_i^+$ or $M_i^-$ (for simplicity we refer to it just as $M_i$)
separately, and run a binary search through the $y$-structure of its
slab $\psi_i$ that contains $\psi$. In each step of the search we
consider an arc $\gamma$ of the $y$-structure, and determine which
side of $\gamma$ (within the slab $\psi$), can meet $K^*$, assuming
that $\psi \cap K^* \neq \emptyset$; if $\gamma \cap K^* \neq
\emptyset$ we will detect it and stop right away. Before describing
in detail how to resolve each comparison with an arc $\gamma$, we
note that this results in $O(\log n)$ comparisons of arcs $\gamma$
to $K^*$ for each of the $O(\log n)$ planar maps $M_i^+$ and
$M_i^-$. However, we can reduce the number of comparisons to $O(\log
n)$ in total, by simulating (sequentially) a parallel implementation
of this step, as follows. There are $O(\log n)$ parallel steps, and
in each step we execute a single step of the binary search in each
of the maps $M_i^+, M_i^-$. In each parallel step we need to compare
$K^*$ to a set $G$ of $O(\log n)$ arcs, one of each of the planar
maps $M_i^+, M_i^-$. Consider the portion $\A'(G)$ of the
arrangement $\A(G)$ of the arcs in $G$ which lies in $\psi$. Let
$L(G)$ denote the set of $O(\log^2 n)$ $y$-parallel lines which pass
through the vertices of $\A'(G)$. We run a binary search through the
lines of $L(G)$, using calls to the algorithm for $\Pi_1$ to guide
the search, to locate $K^*$ amid these lines, in a total of
$O(\log^3 n \log \log n)$ running time. This step (if it did not
find a line crossing $K^*$) may trim $\psi$ to a narrower slab
$\psi'$ in which $K^*$ must lie if $K^* \neq \emptyset$. Put $G' =
\{\gamma \cap \psi' \mid \gamma \in G\}$, and observe that the arcs
of $G'$ are pairwise disjoint and form a sorted sequence in the
$y$-direction. We then perform a binary search through the arcs in
$G'$, using $O(\log \log n)$ comparisons to $K^*$. Each comparison
is carried out in $O(\log^4 n)$ time, in a manner detailed below.
Once the binary search is terminated, we can determine the outcomes
of the comparisons between $K^*$ and each of the arcs in $G'$ and
proceed to the next parallel step. Applying this approach to each of
the $O(\log n)$ parallel steps results in an $O(\log^5 n \log \log
n)$-algorithm for solving $\Pi_2$. We again use an appropriate
variant of Cole's technique to improve the running time by a $\log
\log n$ factor, in a manner similar to the one described in the
solution of $\Pi_1$.

To carry out a comparison between an arc $\gamma \in G'$ and $K^*$,
we act under the assumption that $\gamma \cap K^* \neq \emptyset$,
and try to locate a point of $\gamma \cap K^*$ in each of the other
maps. Suppose, to simplify the description, that we managed to
locate the entire $\gamma$ in a single face of each of the other
maps $M_j^+$, $M_j^-$. This yields a set $\B$ of $O(t)$ balls, so
that a point $v \in \gamma$ lies in $K^*$ if and only if it lies in
the $xy$-projection $K_0^*$ of $\bigcap \B$. We then test whether
$\gamma$ intersects $K_0^*$. If so, we have found a point in $K$ and
stop right away. Suppose then that $K_0^* \cap \gamma = \emptyset$.
If $K_0^* \cap \psi' = \emptyset$ then $K$ must be empty, because we
already know that $K^* \subset \psi'$. If $K_0^* \cap \psi' \neq
\emptyset$, then we know on which side of $\gamma$ to continue the
binary search in (the portion within $\psi'$ of) $\psi_i$.

\begin{figure}[htbp]
\begin{center}
\input{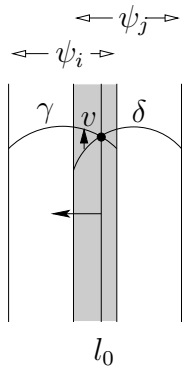}
\caption{\small \sf Comparing $\gamma \cap K^*$ with $\delta$. The
outcome of $\Pi_1(l_0)$ determines (a) the side of $\delta$ in which
the search in $\psi_j$ should continue, and (b) the portion of
$\gamma$ which can still meet $K^*$. The subslab $\psi'$ is drawn
shaded.} \label{figure:pi_2}
\end{center}
\end{figure}

In general, though, $\gamma$ might split between several cells of a
map $M_j$, where $M_j$ denotes, as above, one of the maps $M_j^+$ or
$M_j^-$. This forces us to narrow the search to a subarc of
$\gamma$, in the following manner. We run a binary search through
the $y$-structure of the corresponding slab $\psi_j$ of $M_j$, which
contains $\psi'$, and repeat it for each of the maps $M_j$. In each
step of the search, we need to compare $\gamma$ (or, more precisely,
some point in $\gamma \cap K^*$) with some arc $\delta$ of $\psi_j$,
which we do as follows. If $\gamma$ lies, within $\psi'$, completely
on one side of $\delta$, we continue the binary search in $\psi_j$
on that side of $\delta$. If $\gamma$ intersects $\delta$, we pick
an intersection point $v$ of $\gamma$ and $\delta$, pass a
$y$-parallel line $l_0 \subset \mathbb{R}^2$ through $v$, and run
the non-generic version of the algorithm to solve $\Pi_1(l_0)$. (See
Figure~\ref{figure:pi_2}.) As before, if $l_0 \cap K^* \neq
\emptyset$ we detect this and stop. Otherwise, we know which of the
two portions of $\gamma$, delimited by $v$, can intersect $K^*$. We
repeat this step for each of the at most four intersection points of
$\gamma$ and $\delta$ (observing that these are elliptic arcs), and
obtain a connected portion $\gamma'$ of $\gamma$, delimited by two
consecutive intersection points, whose relative interior lies
completely above or below $\delta$, so that $\gamma \cap K^*$, if
nonempty, lies in $\gamma'$. This allows us to resolve the generic
comparison with $\delta$, and continue the binary search through
$\psi_j$. (On the fly, each comparison with a line $l_0$ narrows
$\psi'$ still further.)

To make this procedure more efficient, we perform the binary
searches through the slabs $\psi_j$ in parallel, as follows. As
before, we run in parallel the binary searches through each of the
slabs $\psi_j$ using $O(\log n)$ parallel steps. In each parallel
step we need to compare a set $D$ of $O(\log n)$ arcs to $\gamma$,
one arc $\delta$ from each planar map $M_j$. We intersect each of
the arcs in $D$ with $\gamma$ and obtain a set $Z$ of $O(\log n)$
intersection points. Let $L_Z$ denote the set of the $O(\log n)$
$y$-parallel lines which pass through the points of $Z$. We run a
binary search through the lines of $L_Z$, using calls to the
algorithm for $\Pi_1$ to guide the search, in a total of $O(\log^3 n
\log \log n)$ running time. We obtain a connected portion $\gamma'$
of $\gamma$, delimited by two consecutive intersection points of
$Z$, whose relative interior lies completely above or below each
$\delta \in D$, so that $\gamma \cap K^*$, if nonempty, lies in
$\gamma'$. This allows us to resolve each comparison between $K^*$
and an arc $\delta \in D$, assuming that $\gamma \cap K^* \neq
\emptyset$, and we continue the binary search through each $M_j$ in
the same manner.

We again use a variant of Cole's technique~\cite{RC} to slightly
improve this bound further. In each parallel step we have a
collection $Z$ of $O(\log n)$ weighted points, each of which is an
intersection point of $\gamma$ with some arc $\delta$ from one of
the planar maps $M_j$, and we need to compare each of the points of
$Z$ with $K^*$. Let $D$ denote the set of these active arcs.

Note that each arc $\delta$ participating in this step
contributes (at most) four points to $Z$, for a total of at most
$4|D|$ points. We perform three steps of a (weighted) binary search on
the points of $Z$, where each step takes the weighted median $z_0$
of an appropriate portion of $Z$, and calls $\Pi_1(l_0)$, where $l_0$
is the vertical line through $z_0$. These $\Pi_1$-steps resolve the
comparisons with $K^*$ of all but $1/8$ of the points of $Z$, that is,
at most $(1/8)\cdot 4|D| = |D|/2$ points of $Z$ are still unresolved.

In other words, after the three calls
to the algorithm for solving $\Pi_1$ (in the first parallel
step of the execution), we can determine the outcomes of the
comparisons of at least half of the arcs in $D$ with $K^*$. We can
then proceed in this manner and apply Cole's technique (as before),
by using only a constant number of calls to $\Pi_1$ in each of the
$O(\log n)$ parallel steps of searching in all the maps.
This reduces a $\log \log n$ factor from the bound of the
running time, so it is only $O(\log^5n)$ time.

When these searches terminate, we end up with a 2-face in each
$M_j$, in which $\gamma \cap K^*$ lies (if nonempty), and we reach
the scenario described in a preceding paragraph. As explained there,
we can now either determine that $K \neq \emptyset$, or that $K =
\emptyset$, or else we know which side of $\gamma$, within $\psi_i$
(or, rather, within $\psi'$) can contain $K^*$, and we continue the
binary search through $\psi_i$ on that side.

When the binary search through $\psi_i$ terminates, we have a 2-face
$\zeta_i$ of $M_i$, where $K^*$ must lie, and we retrieve the ball
$b_i$ corresponding to $\zeta_i$. We repeat this step to each of the
maps $M_i^+$ and $M_i^-$ of each of the $t$ spherical polytopes
$S_i$, and obtain a set $\B_1$ of $2t$ balls. In addition, the
searches through the maps $M_i^+$ and $M_i^-$ may have trimmed
$\psi'$ to a narrower strip $\psi''$, and have produced a set $\B_2$
of witness balls, so that the $xy$-projection of their intersection
lies inside $\psi''$. $\B_2$ may consist of a total of $O(t^3 \log^2
n)$ witness balls, as is easy to verify. In addition, the
second-level searches produce an additional collection $\B_2'$,
consisting of balls corresponding to faces of the maps $M_j^+$ and
$M_j^-$, in which the second-level searches have ended; their
overall number is $O(t^2 \log n)$. Put $K_2 = \bigcap (\B_1 \cup
\B_2 \cup \B_2')$. Hence $K \neq \emptyset$ if and only if $K_2 \neq
\emptyset$.

As already noted, the overall running time of the emptiness
detection is $O(\log^5 n)$.

So far, we have only determined whether $K$ is empty or not.
However, to enable the decision procedure to discriminate between
the cases $r^* = r$ and $r^* < r$ we need to refine the algorithm,
so that it can also determine whether $K$ has nonempty interior (we
refer to an intersection $K$ with this property as
\emph{non-degenerate}).  To do so, we make the following
modifications to the algorithm described above. Each step in the
emptiness testing procedure which detects that $K \neq \emptyset$
obtains a specific point $w$ that belongs to $K$. Moreover, $w$
belongs to the intersection $K_1$ of polylogarithmically many
witness balls, and does not lie on the boundary of any other ball.
This is because each of the procedures $\Pi_0, \Pi_1$, or $\Pi_2$
locates the $xy$-projection $w^*$ of $w$ (which, for $\Pi_1$ and
$\Pi_2$ is a generic, unknown point in $K$) in each of the maps
$M_i^+, M_i^-, i = 1,\ldots,t$, and the collection of the witness
balls gathered during the various steps of the searches contains all
the balls that participate in the corresponding spherical polytopes
$S_i$ on whose boundary $w$ can lie. Thus, when we terminate with a
point $w \in K$, we find, among the polylogarithmically many witness
balls, the at most four balls whose boundaries contain $w$ (recall
our general position assumption), and test whether their
intersection is the singleton $\{w\}$. It is easily checked that
this is equivalent to the condition that $K$ is degenerate.

This completes the description of the algorithm, and concludes the
proof of Proposition~\ref{prop:intersection_test}.

\section{Discussion and Open Problems.}
\label{sec:discussion} In this paper we presented two algorithms for
computing the 2-center of a set of points in $\reals^3$. The first
algorithm takes near-cubic time, and the second one takes
near-quadratic time provided that the two centers are not too close
to each other. Note that our second algorithm may be slightly
revised, so that it receives, in addition to $P$, a parameter
$\epsilon > 0$ as input, and returns a solution for the 2-center
problem for $P$, if $\epsilon \leq 1-\frac{r^*}{r_0}$. To this end,
we run the exponential search until we reach a value of $r$ with
$1-\frac{r}{r_0} \leq \epsilon$. If along the search we have found a
value of $r$ such that $r \geq r^*$, we stop the search and run
Chan's technique with the constraint that $r^* \leq r$, as above.
Otherwise, we have $r^* > r_0(1-\epsilon)$ and we may return the
smallest enclosing ball of $P$ as an $\epsilon$-approximate solution
for the 2-center problem. This way, we ensure that the running time
of our algorithm is $O(\epsilon^{-3} n^2\log^5 n)$.

An obvious open problem is to design an algorithm for the 2-center
problem that runs in near-quadratic time on all point sets in
$\reals^3$.  Another interesting question is whether the 2-center
problem in $\mathbb{R}^3$ is \emph{{\sc 3sum}-hard} (see~\cite{GO}
for details), which would suggest that a near-quadratic algorithm is
(almost) the best possible for this problem.


\end{document}